\theoremstyle{definition}
\newtheorem{theorem}{Theorem}[section]
\newtheorem{definition}[theorem]{Definition}
\newtheorem{remark}[theorem]{Remark}
\newtheorem{lemma}[theorem]{Lemma}
\newtheorem{algorithm}[theorem]{Algorithm}
\newenvironment{alg}{
\begin{algorithmic}[1]
\normalfont
}{
\end{algorithmic}
}
\newcommand{\mb}[1]{\mathbb{#1}}
\newcommand{\N}{\mb N}
\newcommand{\Z}{\mb Z}
\newcommand{\R}{\mb R}
\newcommand{\Q}{\mb Q}
\renewcommand{\epsilon}{\varepsilon}
\newcommand{\F}{\mathcal{F}}
\DeclareMathOperator{\Aff}{Aff}
\DeclareMathOperator{\GL}{GL}
\DeclareMathOperator{\Orb}{Orb}
\DeclareMathOperator{\Span}{Span}
\DeclareMathOperator{\diam}{diam}
\newcommand{\m}[1]{ \begin{pmatrix} #1 \end{pmatrix} }
\title[Canonical form of finite subsets of $\Z^\MakeLowercase{d}$]{An algorithm for canonical forms of finite subsets of $\Z^d$ up to affinities}
\author{Giovanni Paolini}
\address{Scuola Normale Superiore, Piazza dei Cavalieri 7, 56126 Pisa (Italy)}
\email{giovanni.paolini@sns.it}
\begin{document}

\begin{abstract}
 In this paper we describe an algorithm for the computation of canonical forms of finite subsets of $\Z^d$, up to affinities over $\Z$. 
 For fixed dimension $d$, this algorithm has worst-case asymptotic complexity $O(n \log^2 n \, s\,\mu(s))$, where $n$ is the number of points in the given subset, $s$ is an upper bound to the size of the binary representation of any of the $n$ points, and $\mu(s)$ is an upper bound to the number of operations required to multiply two $s$-bit numbers.
 In particular, the problem is fixed-parameter tractable with respect to the dimension $d$.
 
 This problem arises e.g.\ in the context of computation of invariants of finitely presented groups with abelianized group isomorphic to $\Z^d$.
 In that context one needs to decide whether two Laurent polynomials in $d$ indeterminates, considered as elements of the group ring over the abelianized group, are equivalent with respect to a change of basis.
 
 \keywords{Computational complexity \and Integral linear algebra \and Canonical form \and Group theory}
\end{abstract}

\maketitle

The final publication is available at Springer via \url{http://dx.doi.org/10.1007/s00454-017-9895-6}.

\section{Introduction}
\label{sec:introduction}

The problem we are going to study is that of algorithmically determining whether two configurations of points in the $d$-dimensional integral lattice can be obtained, one from the other, through an affine automorphism of $\Z^d$.
Such affine automorphism need not preserve the order in which the elements of the two sets are specified.
If we required instead the order to be preserved, the problem could be more easily solved using the Hermite normal form (see Section \ref{sec:canonical-form-with-frame}).

For instance, the second set of points in Figure \ref{fig:example} can be obtained from the first one by applying the affinity
\[ x\,\mapsto \m{-1 & -1\\ \phantom{-}0 & \phantom{-}1} x + \m{13\\ 0}. \]

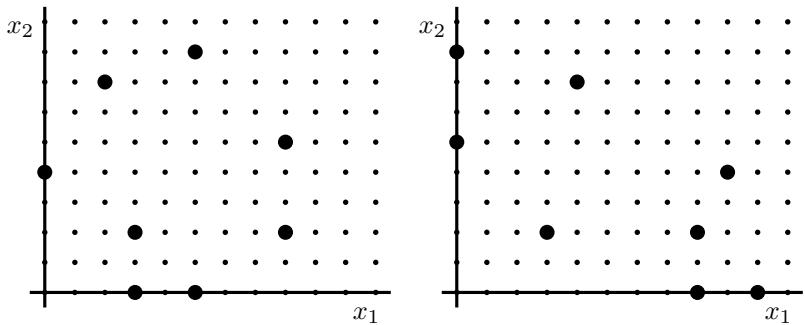
\begin{figure}[thbp]
  \begin{center}
    \begin{tikzpicture}[scale=0.4]
\draw[very thick] (-0.5,0) -- (11.5,0)
 node[left,xshift=0 cm,yshift=-0.3 cm]{$x_1$};
\draw[very thick] (0,-0.5) -- (0,9.5)
 node[left,yshift=-0.3 cm]{$x_2$};
\foreach \x in {0,...,11}
\foreach \y in {0,...,9}
\fill[black,xshift=\x cm,yshift=\y cm] (0,0) circle (0.09);
\fill[black,xshift=3 cm,yshift=0 cm] (0,0) circle (0.25);
\fill[black,xshift=5 cm,yshift=0 cm] (0,0) circle (0.25);
\fill[black,xshift=8 cm,yshift=2 cm] (0,0) circle (0.25);
\fill[black,xshift=8 cm,yshift=5 cm] (0,0) circle (0.25);
\fill[black,xshift=5 cm,yshift=8 cm] (0,0) circle (0.25);
\fill[black,xshift=2 cm,yshift=7 cm] (0,0) circle (0.25);
\fill[black,xshift=0 cm,yshift=4 cm] (0,0) circle (0.25);
\fill[black,xshift=3 cm,yshift=2 cm] (0,0) circle (0.25);
\end{tikzpicture}
    \begin{tikzpicture}[scale=0.4]
\draw[very thick] (-0.5,0) -- (11.5,0)
 node[left,xshift=0 cm,yshift=-0.3 cm]{$x_1$};
\draw[very thick] (0,-0.5) -- (0,9.5)
 node[left,yshift=-0.3 cm]{$x_2$};
\foreach \x in {0,...,11}
\foreach \y in {0,...,9}
\fill[black,xshift=\x cm,yshift=\y cm] (0,0) circle (0.09);
\fill[black,xshift=10 cm,yshift=0 cm] (0,0) circle (0.25);
\fill[black,xshift=8 cm,yshift=0 cm] (0,0) circle (0.25);
\fill[black,xshift=3 cm,yshift=2 cm] (0,0) circle (0.25);
\fill[black,xshift=0 cm,yshift=5 cm] (0,0) circle (0.25);
\fill[black,xshift=0 cm,yshift=8 cm] (0,0) circle (0.25);
\fill[black,xshift=4 cm,yshift=7 cm] (0,0) circle (0.25);
\fill[black,xshift=9 cm,yshift=4 cm] (0,0) circle (0.25);
\fill[black,xshift=8 cm,yshift=2 cm] (0,0) circle (0.25);
\end{tikzpicture}~~
  \end{center}
  \caption{Example of two equivalent sets for $d=2$.}
  \label{fig:example}
\end{figure}

Instead of trying to directly find if two given sets of points are equivalent (i.e. if there is an affinity that maps one to the other), we will describe a procedure to compute a ``canonical form'' of a set.
Then, to check the equivalence of two sets, it will suffice to check the equality of their canonical forms.

Different possible approaches to this problem can be tried, other than the one we present in this paper.
For instance one can exploit geometric and/or combinatorial constructions, such as the convex hull, which is equivariant under the action of the affine group.
However the optimal (non-output-sensitive) algorithm to compute the $d$-dimensional convex hull runs in time $O(n\log n + n^{\lfloor d/2 \rfloor})$ \cite{chazelle1993optimal}, which is slower than what we want to achieve.
We believe that in dimension $d \leq 3$ it is actually possible to devise a fast algorithm (linear up to a logarithmic factor) for our problem using the convex hull.

Another approach could be to define the canonical form of a set as an element of the orbit which minimizes some quantity (such as the $\|\cdot\|_1$ norm or the $\|\cdot\|_\infty$ norm), but we couldn't find a reasonably fast algorithm to do this.

The approach presented here is completely different, and is based on arithmetic properties of the integral lattice. Its advantages are the almost linear asymptotic complexity (in terms of the size of the set), the generality (it works for any dimension $d$, despite the running time strongly depends on it) and a simple implementation.

Following \cite{downey1994parameterized}, a problem is called \emph{fixed-parameter tractable} with respect to a
parameter $k \in \N$, if for every input with parameter less or equal to $k$, the problem can be solved in $O(f(k)\cdot n^{O(1)})$ time, where $f$ is an arbitrary function independent of the problem size $n$.
In terms of parametrized complexity, our results then imply that the problem we consider is fixed-parameter tractable with respect to the dimension $d$.

This problem fits into the theory of exact point-matching, where we replace here the usual $\R^d$ with the integral lattice $\Z^d$.
The algorithm we derive for affine integral point-matching is much faster compared with known algorithms for continuous point-matching.
For instance, congruency testing in $\R^d$ can be done in time $O(n^{d-2}\log n)$ for $d\geq 3$ with the algorithm described in \cite{alt1988congruence}.
It seems that arithmetic properties of integers play a fundamental role in making integral point-matching faster.

One situation in which this problem arises is in the context of isomorphism between finitely presented torsion-free groups, and particularly in the case of fundamental groups of topological spaces.
Recall that for a group $G$, its \emph{commutator subgroup} $G'$ is the subgroup generated by all commutators $[g,h] = ghg^{-1}h^{-1}$, and its \emph{abelianized group} is the quotient $G/G'$.
Let $G$ be a finitely presented group with abelianized group $H$ isomorphic to $\Z^d$, and let $\psi\colon H\to \Z^d$ be an isomorphism.
Suppose that $G$ admits a presentation with more generators than relations (e.g.\ this is the case for knot and link groups).
In a work of Fox \cite{fox1954calculus2} it is shown how to construct, from $G$ and $\psi$, a Laurent polynomial $\Delta(t_1,\dots,t_d)$, called \emph{Alexander polynomial}, which is defined up to a factor $\pm t_1^{\lambda_1}\cdots t_d^{\lambda_d}$ ($\lambda_i\in\Z$).
This polynomial depends on the chosen isomorphism $\psi$ between the abelianized group and $\Z^d$ (in other words, it depends on the choice of a basis for $H$).
In order to obtain an invariant of $G$ (up to group isomorphisms), one should determine a canonical form for the Alexander polynomial up to change of basis for $H$.
It turns out (see \cite{bellettini2015shape}, Sections 7.6, 7.7) that a change of basis, given by a linear automorphism $A$ of $\Z^d$, affects every monomial $\alpha t_1^{m_1}\cdots t_d^{m_d}$ by transforming the exponents vector $(m_1, \dots, m_d)^t$ with $A$.
Since the Alexander polynomial is itself determined up to a factor $\pm t_1^{\lambda_1}\cdots t_d^{\lambda_d}$, an invariant of $G$ is given by the Alexander polynomial up to the action of the group of affine automorphisms of $\Z^d$ (and a possible change of sign). The determination of a canonical form for such an action is therefore related to the problem we are going to discuss.

In Section \ref{sec:preliminaries} we formally state the problem we are going to solve, and we introduce some notation.
Throughout Sections \ref{sec:canonical-form-with-frame}-\ref{sec:canonical-form} we describe the algorithm, prove its correctness and analyze its complexity.
In Section \ref{sec:alexander} we recall the construction of the Alexander polynomial due to Fox, and we illustrate how to modify our algorithm in order to compute a canonical form of such polynomial.
Section \ref{sec:conclusion} provides a final discussion with some concluding remarks.

\section{Preliminaries and notations}
\label{sec:preliminaries}

Let $d$ be a fixed positive integer.
Let $\GL(d,\Z)$ be the group of linear automorphisms of $\Z^d$ over $\Z$, i.e. the group of $d\times d$ matrices with entries in $\Z$ and determinant $\pm 1$.
Moreover, let $\Aff(d,\Z)$ be the group of affinities of $\Z^d$.
The group $\Aff(d,\Z)$ can be regarded as a subgroup of $\GL(d+1,\Z)$: the affinity $x\mapsto Ax+b$, with $A\in\GL(d,\Z)$ and $b\in \Z^d$, is represented by the block matrix
\[ \m{ \,A\, & \,b\, \\ \,0\, & \,1\, }. \]

Let $X_d$ be the set of all finite subsets $\Lambda$ of $\Z^d$.
The natural action of $\Aff(d,\Z)$ on $\Z^d$ induces an action of $\Aff(d,\Z)$ on $X_d$: explicitly, if $\varphi\in \Aff(d,\Z)$ and $\Lambda = \{p_1, p_2, \dots, p_n\} \subseteq \Z^d$, the action is given by
\[ \varphi (\{p_1, p_2, \dots, p_n\}) = \{ \varphi(p_1), \varphi(p_2), \dots, \varphi(p_n)\}. \]

Our purpose is to describe a canonical form for elements of $X_d$ up to the action of $\Aff(d,\Z)$, and an algorithm for the computation of such a canonical form. For any fixed dimension $d$, our algorithm will have worst-case asymptotic complexity $O(n\log^2 n \, s\, \mu(s))$. Here, $n$ is the size of the given subset $\Lambda$ of $\Z^d$ (as above) and $s$ is an upper bound on the size of the binary representation of any coordinate of any point of $\Lambda$. Since $d$ is fixed, $s$ is also (up to a constant) an upper bound to the size of the binary representation of any point of $\Lambda$.
With $\mu(s)$ we indicate an upper bound to the cost of multiplying two $s$-bit integers; for instance, using the Schönhage-Strassen algorithm \cite{schonhage1971schnelle} we would have $\mu(s) = O(s\,\log s \,\log\log s)$.

Since the concept of ``canonical form'' plays a key role in this work, we give the following formal definition.
\begin{definition}
  \label{def:canonical-form}
  Let $S$ be a set, and $G$ be a group acting on $S$. A canonical form for $S$ with respect to the action of $G$ is a function $f\colon S \to S$ satisfying the following two conditions:
  \begin{enumerate}[(1)]
    \item $f(x) \in \Orb(x)$ for all $x\in S$ (here we denote by $\Orb(x)$ the orbit of $x$);
    \item $f(gx) = f(x)$ for all $x\in S$ and $g\in G$.
  \end{enumerate}
  We also say that $f\colon S\to S$ is a weak canonical form if it satisfies condition (2) but does not necessarily satisfy condition (1).
\end{definition}
The second condition simply says that $f$ is constant over any orbit, so $f$ picks a ``canonical representative'' from each orbit.
Having a computable canonical form allows to test whether two elements $x,y\in S$ belong to the same orbit: this happens if and only if $f(x)=f(y)$.

We now give a few more definitions which will be useful later.

\begin{definition}
  A \emph{frame} is an ordered set of affinely independent points of $\Q^d$.
  Given a set $\Lambda\subseteq \Z^d$, a \emph{$\Lambda$-frame} is a frame included in $\Lambda$.
  A frame $Q$ is \emph{$\Lambda$-covering} if $\Lambda\subseteq\Span(Q)$. A $\Lambda$-frame which is also $\Lambda$-covering is shortly called a \emph{complete $\Lambda$-frame}.
\end{definition}

By ``$\Span$'', we always mean the affinely generated subspace over the field of the rational numbers (not over $\Z$). Also the expression ``affinely independent'' is always to be intended over $\Q$, not over $\Z$.
Notice that a $\Lambda$-covering frame is not necessarily a $\Lambda$-frame.

Let $Y_d$ be the set of the pairs $(\Lambda,Q)$, with $\Lambda\in X_d$ and $Q$ a $\Lambda$-covering frame.
Roughly speaking, an element of $Y_d$ is a finite subset of $\Z^d$ together with an affine coordinate system.
In order to find a canonical form for elements $\Lambda\in X_d$, we will first do it for elements $(\Lambda,Q)\in Y_d$, and then we will describe a canonical way to choose a frame $Q$ for each set $\Lambda$.

Finally, by ``lexicographic order'' we will mean the following.
\begin{itemize}
  \item For elements of $\Q^d$, this is the usual lexicographic order.
  \item When comparing two finite \emph{ordered} sets of elements of $\Q^d$ (for instance, two frames), first compare their size and then (if the two sets have the same size) compare them lexicographically.
  \item When comparing two finite \emph{unordered} sets of elements of $\Q^d$ (for instance, two elements of $X_d$), sort each of them lexicographically, and then compare them as ordered sets.
  \item When comparing two elements $(\Lambda_1, Q_1)$ and $(\Lambda_2, Q_2)$ of $Y_d$, first compare $\Lambda_1$ and $\Lambda_2$ and then (if $\Lambda_1 = \Lambda_2$) compare $Q_1$ and $Q_2$.
\end{itemize}

\section{Canonical form, given a frame}
\label{sec:canonical-form-with-frame}

In this section, we want to describe an algorithm that, given a pair $(\Lambda,Q)\in Y_d$, returns a pair $(\Omega,U) = f(\Lambda,Q)$ which has the following properties:
\begin{enumerate}
  \item $\Omega$ is a finite subset of $\Z^d$ in the orbit of $\Lambda$ with respect to the action of $\Aff(d,\Z)$;
  \item $U$ is a complete $\Omega$-frame;
  \item $f$ is a weak canonical form for $Y_d$ in the sense of Definition \ref{def:canonical-form}, i.e.
  \[ f \big(\varphi(\Lambda), \varphi(Q)\big) = f(\Lambda,Q)\quad \forall\,\varphi\in\Aff(d,\Z). \]
\end{enumerate}
Notice that $f$ is not required to be a canonical form (the frame $U$ is not necessarily the image of the frame $Q$ under some affinity; in general, they don't even have the same size).

In what follows we are going to use the Hermite normal form (see for instance \cite{cohen1993course} and \cite{newman1972integral}), shortened ``HNF''. The Hermite normal form of an integral $d\times n$ matrix is a canonical form up to left-multiplication by elements of $\GL(d,\Z)$, satisfying the following additional properties.
\begin{itemize}
  \item It is an upper triangular $d\times n$ matrix, and zero rows are located below non-zero rows.
  \item The pivot (i.e.\ the first non-zero entry) of a non-zero row is positive, and is strictly to the right of the pivot of the row above it.
  \item The elements below pivots are zero, and elements above pivots are non-negative and strictly smaller than the pivot.
\end{itemize}

The following is the pseudocode of the algorithm, which will be subsequently described in words.

\begin{algorithm}[Canonical form, given a frame]
  \label{alg:canonical-form-with-frame}
  This algorithm takes as input a pair $(\Lambda,Q)\in Y_d$ and outputs a pair $(\Omega,U)$ with the properties described above.
  
  \begin{alg}
    \Function{CanonicalFormWithFrame}{$\Lambda,Q$}
      \State $T \gets Q\cap \Lambda$ as a list, with the ordering induced by $Q$ \label{line:initialize-T}
      \State $k \gets \dim\Span(\Lambda)$
      \While{$|T| < k+1$} \label{line:frame-extraction-begin}
	\State $p \gets $ the point of $\Lambda \setminus \Span(T)$ such that its coordinates with respect \label{line:add-point1}
	\State \qquad to the frame $Q$ are lexicographically minimal \label{line:add-point2}
	\State $T \gets T \cup \{p\}$
      \EndWhile \label{line:frame-extraction-end}
      \State $\{p_0,\dots, p_k\} \gets T$
      \State $M \gets d\times k$ matrix with columns $p_1-p_0,\, \dots,\, p_k-p_0$ \label{line:definition-M}
      \State $A \gets$ any element of $\GL(d,\Z)$ such that $AM=\Call{HNF}{M}$ \label{line:hermite}
      \State $\psi \gets$ affinity defined by $x\mapsto A(x-p_0)$ \label{line:definition-psi}
      \State \Return $(\Omega,U) = \big(\psi(\Lambda), \psi(T)\big)$
    \EndFunction
  \end{alg}
\end{algorithm}

Let us describe briefly the steps of Algorithm \ref{alg:canonical-form-with-frame}.
In line \ref{line:initialize-T}, we initialize a new frame $T$ (which is actually a $\Lambda$-frame), extracting from $Q$ the elements that also belong to $\Lambda$. $T$ is given the ordering induced as a subset of $Q$.
Then, in lines \ref{line:frame-extraction-begin}-\ref{line:frame-extraction-end}, we complete $T$ to a $\Lambda$-covering frame using points of $\Lambda$ (thus $T$ becomes a complete $\Lambda$-frame). This is done adding a point of $\Lambda$ at a time, each time choosing the point that is lexicographically minimal with respect to the frame $Q$.
Then, if $p_0,\dots,p_k$ are the elements of $T$, in line \ref{line:definition-M} we define the matrix $M$ as
\[ M = \Big(\, p_1-p_0 \; \Big| \; \dots \; \Big| \; p_k-p_0 \, \Big). \]
In line \ref{line:hermite}, we define $A$ as any $d\times d$ matrix such that the left multiplication by $A$ sends $M$ to its Hermite normal form (the algorithm in \cite{storjohann1996asymptotically} computes both the HNF and such an $A$).
Finally, we define $\psi$ as the affinity $x\mapsto A(x-p_0)$, which is the affinity that maps $p_0$ to the origin and each $p_i$ ($i=1,\dots,k$) to the $i$-th column of $\Call{HNF}{M}$. The affinity $\psi$ is used to transform the pair $(\Lambda,T)$ into the pair which is then returned.
Properties 1 and 2 at the beginning of this section are automatically verified by Algorithm \ref{alg:canonical-form-with-frame}. Property 3 is given by the following theorem.

\begin{theorem}
  \label{thm:correctness-canonical-form-with-frame}
  The function defined by Algorithm \ref{alg:canonical-form-with-frame} is a weak canonical form.
\end{theorem}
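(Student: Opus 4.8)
The plan is to verify directly condition~(2) of Definition~\ref{def:canonical-form} for the function $f$ computed by Algorithm~\ref{alg:canonical-form-with-frame}; that is, to show $f\big(\varphi(\Lambda),\varphi(Q)\big)=f(\Lambda,Q)$ for every $\varphi\in\Aff(d,\Z)$. Write $\varphi\colon x\mapsto Bx+c$ with $B\in\GL(d,\Z)$ and $c\in\Z^d$. First I would note that $\big(\varphi(\Lambda),\varphi(Q)\big)$ again lies in $Y_d$: an affinity preserves affine independence and satisfies $\varphi(\Span(S))=\Span(\varphi(S))$, so $\varphi(Q)$ is a $\varphi(\Lambda)$-covering frame. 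I then run the two executions of the algorithm side by side, writing $T,k,M,A,\psi$ for the objects produced on input $(\Lambda,Q)$ and $T',k',M',A',\psi'$ for those produced on input $\big(\varphi(\Lambda),\varphi(Q)\big)$. Since the algorithm returns $\big(\psi(\Lambda),\psi(T)\big)$ and $\big(\psi'(\varphi(\Lambda)),\psi'(T')\big)$ respectively, the theorem reduces to the two identities $\psi'\big(\varphi(\Lambda)\big)=\psi(\Lambda)$ and $\psi'(T')=\psi(T)$.

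The key observation is that affine coordinates are invariant under affinities: if $R=(r_0,\dots,r_m)$ is a frame and $p\in\Span(R)$, then the unique coefficients $\lambda_i$ with $p=r_0+\sum_i\lambda_i(r_i-r_0)$ coincide with those expressing $\varphi(p)$ in the frame $\varphi(R)$, because $\varphi$ is affine. Using this, and that $\varphi$ is a bijection commuting with $\Span$, I would show by induction on the while loop that $T'=\varphi(T)$ as ordered lists. The base case is $\varphi(Q)\cap\varphi(\Lambda)=\varphi(Q\cap\Lambda)$ with matching induced orderings; in the inductive step, $k'=\dim\Span(\varphi(\Lambda))=\dim\Span(\Lambda)=k$, one has $\varphi(\Lambda)\setminus\Span(\varphi(T))=\varphi\big(\Lambda\setminus\Span(T)\big)$, and coordinate-invariance identifies the lexicographically minimal point of the first set with the image under $\varphi$ of the lexicographically minimal point of the second. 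Along the way I would also record two routine but necessary facts: since the points of $\Lambda$ are distinct and $\Lambda\subseteq\Span(Q)$, the affine coordinate map is injective on $\Lambda$, so the lexicographic minimum is unique at each step and both executions are well defined; and as long as $|T|\le k$ the set $\Lambda\setminus\Span(T)$ is nonempty, so the loop terminates with $T$ a complete $\Lambda$-frame.

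Given $p_i'=\varphi(p_i)$ for $i=0,\dots,k$, the $i$-th column of $M'$ is $\varphi(p_i)-\varphi(p_0)=B(p_i-p_0)$, so $M'=BM$. As $B\in\GL(d,\Z)$ and the Hermite normal form is invariant under left multiplication by $\GL(d,\Z)$, we get $\HNF(M')=\HNF(M)$, hence $A'M'=A'BM=\HNF(M)=AM$, so $(A'B-A)M=0$. Thus the integer matrix $A'B-A$ annihilates every column of $M$, and therefore annihilates the $\Q$-linear span $V$ of those columns. This is the step that makes the whole argument work: for each $x\in\Lambda$ we have $x\in\Span(T)=p_0+V$ (as $T$ is a complete $\Lambda$-frame), so $x-p_0\in V$, and hence
\[ \psi'\big(\varphi(x)\big)=A'\big(\varphi(x)-\varphi(p_0)\big)=A'B(x-p_0)=A(x-p_0)=\psi(x). \]
Applying this identity elementwise over $\Lambda$ yields $\psi'\big(\varphi(\Lambda)\big)=\psi(\Lambda)$, and since $T\subseteq\Lambda$ and $T'=\varphi(T)$ it also yields $\psi'(T')=\psi'\big(\varphi(T)\big)=\psi(T)$. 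Therefore $f\big(\varphi(\Lambda),\varphi(Q)\big)=f(\Lambda,Q)$.

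I expect the main obstacle to be precisely this last step, and with it the tacit point that $f$ is even well defined: line~\ref{line:hermite} only specifies $A$ as \emph{some} element of $\GL(d,\Z)$ with $AM=\HNF(M)$, so $A$ (and hence $\psi$) is not determined by the data alone. The observation that every $x-p_0$ with $x\in\Lambda$ lies in the column space $V$ of $M$, on which $A$ is pinned down by $AM=\HNF(M)$, simultaneously shows that $\psi(\Lambda)$ and $\psi(T)$ do not depend on the choice of $A$ (so $f$ is well defined) and kills the discrepancy between $A'B$ and $A$ in the computation above. Everything else --- the affine-invariance of coordinates and the behaviour of $\Span$, $\dim$, and $\HNF$ under $\varphi$ --- is straightforward bookkeeping.
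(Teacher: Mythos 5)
Your proof is correct and follows essentially the same route as the paper: run the two executions in parallel, show $T'=\varphi(T)$ via invariance of affine coordinates, deduce $M'=BM$ and hence $A'M'=AM$ from uniqueness of the Hermite normal form, and conclude because $A'B$ and $A$ agree on the span of the columns of $M$, which contains $x-p_0$ for every $x\in\Lambda$. Your extra remarks on well-definedness of the (non-unique) choice of $A$, uniqueness of the lexicographic minimum, and loop termination are sound additions that the paper leaves implicit.
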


\begin{proof}
  Suppose the function \Call{CanonicalFormWithFrame}{} is given as input the pair $(\tilde\Lambda, \tilde Q)$ instead of $(\Lambda,Q)$, where $\tilde\Lambda = \varphi(\Lambda)$ and $\tilde Q = \varphi(Q)$ for some $\varphi\in\Aff(d,\Z)$.
  Let us analyze how this change affects the output. We denote all the variables of the execution of the call \Call{CanonicalFormWithFrame}{$\tilde\Lambda,\tilde Q$} by adding a tilde over them, in order to distinguish them from those of the call \Call{CanonicalFormWithFrame}{$\Lambda,Q$}.
  
  First (line \ref{line:initialize-T}), we have $\tilde T = \tilde Q \cap \tilde\Lambda = \varphi(Q) \cap \varphi(\Lambda) = \varphi(T)$.
  Then we turn to lines \ref{line:frame-extraction-begin}-\ref{line:frame-extraction-end}: inductively it is easy to show that, at each step of the loop, $\tilde T = \varphi(T)$ (this is true because the coordinates of a point $p\in \Lambda$ with respect to the frame $Q$ are the same as the coordinates of the point $\varphi(p)\in\varphi(\Lambda)$ with respect to the frame $\varphi(Q)$).
  So, after the execution of the loop, we still have $\tilde T = \varphi(T)$ as ordered sets.
  Let $\tilde p_0,\dots \tilde p_k$ be the elements of $\tilde T$, so that $\tilde p_i = \varphi(p_i)$ for all $i$.
  Let $B \in \GL(d,\Z)$ be the linear part of the affinity $\varphi$ (so $\varphi$ has the form $x\mapsto Bx+v$, for some $v\in\Z^d$). The $i$-th column of the matrix $\tilde M$ (line \ref{line:definition-M}) is then $\tilde p_i-\tilde p_0 = \varphi(p_i) - \varphi(p_0) = B(p_i-p_0)$. Thus we have the relation $\tilde M = BM$, which means that the matrices $\tilde M$ and $M$ are equivalent to each other, up to left multiplication. In particular, they have the same Hermite normal form.
  This means that $\tilde A \tilde M = AM$ (line \ref{line:hermite}).
  In other words (reading this equality column by column), for each $i$ we have that
  \begin{equation}
    \label{eqn:transformation}
    \tilde A \big( B(p_i-p_0) \big) = A (p_i-p_0).
  \end{equation}
  Relation \eqref{eqn:transformation} can be interpreted as follows: the linear transformations $\tilde A B$ and $A$ coincide on the vectors $p_i-p_0$, and so they coincide on the linear span of these vectors, which is the linear subspace parallel to $\Span(\Lambda)$.
  The affinity $\tilde\psi$ defined in line \ref{line:definition-psi} maps $x$ to $\tilde A (x - \tilde p_0) = \tilde A (x - \varphi(p_0))$. Now, let $\tilde p = \varphi(p)$ be a point of $\tilde\Lambda$, with $p\in\Lambda$. Then,
  \begin{IEEEeqnarray*}{rCl}
    \tilde\psi(\tilde p) &=& \tilde A \big( \tilde p - \varphi(p_0) \big) = \tilde A \big( \varphi(p) - \varphi(p_0) \big) \\
    &=& \tilde A \big( B(p-p_0) \big) \stackrel{(*)}{=} A(p-p_0) = \psi(p),
  \end{IEEEeqnarray*}
  where the equality marked with a $(*)$ follows by the fact that $p-p_0$ belongs to the linear subspace parallel to $\Span(\Lambda)$.
  So we have finally proved that $\tilde\psi(\tilde\Lambda) = \psi(\Lambda)$.
  Similarly, we also have that $\tilde\psi(\tilde p_i) = \psi(p_i)$ for all $i$, and thus $\tilde\psi(\tilde T) = \psi(T)$ as ordered sets. \qed
\end{proof}

\begin{remark}
  Line \ref{line:initialize-T} of Algorithm \ref{alg:canonical-form-with-frame} could be replaced by the simpler initialization ``$T\gets \varnothing$'', and Theorem \ref{thm:correctness-canonical-form-with-frame} would still hold.
  But the slightly more complicated initialization of line \ref{line:initialize-T} will be required in the proof of Lemma \ref{lemma:limited}.
\end{remark}

\begin{theorem}
  \label{thm:complexity-canonical-form-with-frame}
  If $d$ is fixed, then the worst-case asymptotic complexity of Algorithm \ref{alg:canonical-form-with-frame} is $O(n\, \mu(s))$, where $n=|\Lambda|$ and $s$ is an upper bound on the size of the binary representation of any element of $\Lambda$ or $Q$.
  Moreover, the size of the binary representation of any element of the returned set $\Omega$ is $O(s)$.
\end{theorem}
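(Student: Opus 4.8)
The plan is to walk through Algorithm \ref{alg:canonical-form-with-frame} line by line, and for each line bound (a) its running time and (b) the bit-size of every integer or rational it produces, always treating $d$ as a constant so that a point of $\Z^d$, a $d$-vector, or a $d\times d$ matrix is $O(1)$ data. The workhorse observation is that all the linear algebra occurring in the algorithm --- solving for the coordinates of a point of $\Lambda$ with respect to the frame $Q$ (an $m\times m$ rational system with $m\le d$ and $O(s)$-bit integer data), and testing whether a point lies in $\Span(T)$ (a rank computation on an integer matrix of bounded size with $O(s)$-bit entries) --- can be carried out fraction-free, or equivalently expressed through Cramer's rule, so that every intermediate quantity is a numerator or denominator equal to a minor of a bounded-size integer matrix with $s$-bit entries, hence of bit-size $O(ds)=O(s)$. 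Consequently one such coordinate computation or span-membership test costs $O(\mu(s))$.

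Granting this, the bookkeeping is routine. Line \ref{line:initialize-T} and the computation of $k$ are $O(1)$ linear passes over $\Lambda$: test membership in $\Lambda$ of each of the $O(1)$ points of $Q$, and run an incremental rank computation maintaining a basis of the tangent space of $\Span(\Lambda)$ of size at most $d$; each of the $n$ points is processed $O(1)$ times at cost $O(\mu(s))$, for $O(n\,\mu(s))$ total. The loop on lines \ref{line:frame-extraction-begin}--\ref{line:frame-extraction-end} executes at most $k+1=O(1)$ times; in each pass we scan $\Lambda$, discard the points of the current $\Span(T)$, compute the $Q$-coordinate vectors of the rest, and keep the lexicographically minimal one --- the coordinate vectors are $O(1)$-tuples of $O(s)$-bit rationals, so each comparison is $O(s)$, and the pass costs $O(n\,\mu(s))$. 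Building $M$ on line \ref{line:definition-M} is $O(1)$ subtractions of $s$-bit integers, and its entries have bit-size $O(s)$. Finally, line \ref{line:definition-psi} and the return apply the affinity $x\mapsto A(x-p_0)$ to each of the $n$ points of $\Lambda$ and to the $O(1)$ points of $T$; since $A$ and the vectors $x-p_0$ have $O(s)$-bit entries, each application is $O(1)$ multiplications and additions of $O(s)$-bit integers, costing $O(\mu(s))$, the outputs again have bit-size $O(s)$, and this step contributes $O(n\,\mu(s))$. Summing the per-line bounds gives the claimed $O(n\,\mu(s))$, and the bound on the entries of $\psi(\Lambda)=\Omega$ gives the final assertion.

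The one step that needs genuine care, rather than bookkeeping, is line \ref{line:hermite}: computing $\HNF(M)$ together with a unimodular $A$ realizing it. Here one must (i) bound the size of the output --- the entries of $\HNF(M)$ are controlled by the maximal minors of $M$, hence of bit-size $O(s)$, and $A$ may likewise be taken with $O(s)$-bit entries --- and (ii) invoke a fast HNF algorithm for bounded-size matrices in which intermediate coefficients are kept polynomially bounded (e.g.\ by reduction modulo a suitable determinant bound), so that this one-time computation costs $O(\mu(s))$; see \cite{cohen1993course}. I expect this HNF estimate, together with the accompanying point that naive Gaussian elimination over $\Q$ must be replaced by a fraction-free or modular variant to avoid exponential coefficient blow-up, to be the main technical obstacle; once it is in place, everything else follows from the elementary size and pass-count estimates above.
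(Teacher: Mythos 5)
Your proposal is correct and takes essentially the same route as the paper's proof: a line-by-line cost and bit-size accounting in which everything except the Hermite normal form step is an $O(n\,\mu(s))$ scan over $\Lambda$ with $O(s)$-bit intermediate data, and the HNF step is settled by invoking a known fast algorithm together with the $O(s)$ bound on the entries of $\HNF(M)$ and of the transform $A$ (the paper cites Hafner--McCurley and Storjohann for exactly this, where you cite Cohen). Your additional detail on fraction-free/Cramer-style control of coefficient growth in the frame-extraction loop merely makes explicit what the paper asserts without comment.
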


\begin{proof}
  Consider lines \ref{line:add-point1}-\ref{line:add-point2}.
  Recognizing if a point $p\in \Lambda$ belongs to $\Span(T)$ reduces to compute the determinant of a $(d+1)\times (d+1)$ matrix. The product of any $d+1$ entries of this matrix has size $O((d+1)s) = O(s)$.
  Therefore the computation of the determinant requires a constant number of sums and products of numbers of size $O(s)$, i.e.\ $\mu(O(s)) = O(\mu(s))$ operations (exchanging $\mu$ and $O$ can be done because $\mu$ is polynomial).
  
  Finding the coordinates of $p$ with respect to the frame $Q$ can be done inverting a $(d+1)\times (d+1)$ matrix, which reduces to computing a constant number of determinants. Thus, as before, $O(\mu(s))$ operations are needed and the obtained coordinates have size $O(s)$.
  
  Then lines \ref{line:add-point1}-\ref{line:add-point2} require $O(n\, \mu(s))$ operations, because for each point of $\Lambda$ we need to perform the above operations and then possibly update the minimal coordinates found so far.
  There are at most $d+1 = O(1)$ iterations of the while loop, so lines \ref{line:initialize-T}-\ref{line:definition-M} require $O(n\, \mu(s))$ operations.
  
  Since the size of the binary representation of $M$ is $O(s)$, the computation of the Hermite normal form of $M$ can be done in $O(\mu(s))$ time, for instance with the algorithm in \cite{storjohann1996asymptotically}.
  In \cite{storjohann1998computing} it is also shown that $\log \| \Call{HNF}{M} \| = O(s)$, where $\|\cdot\|$ denotes the maximum absolute value of an entry of the matrix.
  In $O(\mu(s))$ time, $A$ and $\psi$ can be computed too (and they share the same bound on the coefficients as \Call{HNF}{$M$}).
  Finally, the computation of $\psi(\Lambda)$ requires $O(n\,\mu(s))$ operations, and that of $\psi(T)$ requires $O(\mu(s))$ operations.
  
  The fact that the binary representation of the elements of $\Omega$ is $O(s)$ follows easily from the previous arguments. \qed
\end{proof}

\section{Getting an \texorpdfstring{$\Aff(d,\Z)$}{Aff(d,Z)}-equivariant set of complete \texorpdfstring{$\Lambda$}{Lambda}-frames}
\label{sec:canonical-frames}

We will now describe an algorithm which, given an input set $\Lambda\subseteq\Z^d$, equivariantly returns a nonempty set of complete $\Lambda$-frames. Here by ``equivariantly'' we mean $\Aff(d,\Z)$-equivariantly: if $\tilde\Lambda = \varphi(\Lambda)$ for some $\varphi\in\Aff(d,\Z)$, and the output of the algorithm applied to $\Lambda$ is a set of frames $\{R_1, \dots, R_m\}$, then the output of the algorithm applied to $\tilde\Lambda$ is $\{ \varphi(R_1), \dots, \varphi(R_m)\}$.
Notice that the output set of frames is unordered, whereas each of the frames is itself an ordered set of points.

\begin{remark}
  The set of all complete $\Lambda$-frames satisfies the equivariance condition, but this set is too large for any practical purpose (its size is $\Omega(n^d)$ in the worst case).
  For this reason we need to equivariantly select a ``small'' subset of it.
\end{remark}

From now on we will denote by $\F(\Lambda)$ the set of all $\Lambda$-frames, and by $\F_c(\Lambda)\subseteq\F(\Lambda)$ the set of all complete $\Lambda$-frames.
The pseudocode for the above-mentioned algorithm is the following.

\begin{algorithm}[Equivariant set of complete $\Lambda$-frames]
  \label{alg:canonical-frames}
  \ \\
  This algorithm takes as input a finite set $\Lambda \subseteq \Z^d$ and equivariantly returns a nonempty set of complete $\Lambda$-frames.
  \begin{alg}
    \Function{EquivariantFrames}{$\Lambda$}
      \If{$|\Lambda| = 1$} \Comment{$\Lambda$ has size $1$ (base step of the recursion)}
	\State \Return $\{\Lambda\}$
      \EndIf
      \If{all the points of $\Lambda$ are congruent modulo $2$} \label{line:congruence-condition}
	\State $p \gets $ any point of $\Lambda$ \label{line:choice-of-p}
	\State $\Lambda' \gets (\Lambda - p) / 2$ \Comment{translate point $p$ to origin and divide by $2$} \label{line:shrink}
	\State $S \gets $ \Call{EquivariantFrames}{$\Lambda'$} \label{line:first-call}
	\State \Return $2S + p$ \Comment{perform the inverse transformation} \label{line:return-frames}
      \Else
	\State $\{\Lambda_1,\dots,\Lambda_h\} \gets$ partition of $\Lambda$ in congruence classes modulo 2 \label{line:partition}
	\For{$i\in \{1,\dots, h\}$}
	  \State $S_i \gets$ \Call{EquivariantFrames}{$\Lambda_i$}
	\EndFor
	\State $\Lambda' \gets \bigcup_{i=1}^h \bigcup_{R\in S_i} R$
	\State $F \gets \{ \, T \in\F_c(\Lambda') \,\mid\, T$ lexicographically minimizes \label{line:frames1}
	\State \qquad  $\Call{CanonicalFormWithFrame}{\Lambda,T}$ on $\F_c(\Lambda') \}$ \label{line:frames2}
	\State \Return $F$
      \EndIf
    \EndFunction
  \end{alg}
\end{algorithm}

We will now prove the correctness of Algorithm \ref{alg:canonical-frames}.

\begin{lemma}
  \label{lemma:mod-2}
  Let $p,q\in \Z^d$, and let $\varphi$ be an element of $\Aff(d,\Z)$.
  Then, $p\equiv q \pmod 2$ if and only if $\varphi(p) \equiv \varphi(q) \pmod 2$.
  In other words, being congruent modulo $2$ is an affine invariant.
\end{lemma}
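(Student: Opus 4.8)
The plan is to reduce everything to the linear part of $\varphi$. Write $\varphi$ in the form $x\mapsto Ax+b$ with $A\in\GL(d,\Z)$ and $b\in\Z^d$, as in Section~\ref{sec:preliminaries}. The translation part $b$ is irrelevant here: since $\varphi(p)-\varphi(q) = (Ap+b)-(Aq+b) = A(p-q)$, the congruence $\varphi(p)\equiv\varphi(q)\pmod 2$ is equivalent to $A(p-q)\equiv 0\pmod 2$, while $p\equiv q\pmod 2$ is equivalent to $p-q\equiv 0\pmod 2$. So it suffices to show that for $v\in\Z^d$ one has $v\equiv 0\pmod 2$ if and only if $Av\equiv 0\pmod 2$.

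For the forward direction this is immediate: if $v\in 2\Z^d$ then $Av\in 2\Z^d$ because $A$ has integer entries. For the converse I would use that $A^{-1}$ also has integer entries, which holds because $\det A=\pm1$ (Cramer's rule / the adjugate formula). Then $Av\in 2\Z^d$ implies $v = A^{-1}(Av)\in 2\Z^d$. Equivalently, and perhaps more cleanly, one can observe that reduction modulo $2$ sends $\GL(d,\Z)$ into $\GL(d,\Z/2\Z)$: indeed $\det(A\bmod 2)=(\det A)\bmod 2 = 1$, so the reduced matrix $\bar A$ is invertible over the field $\Z/2\Z$, and hence $v\mapsto \bar A\,v$ is a bijection of $(\Z/2\Z)^d$, giving $v\equiv 0 \iff Av\equiv 0\pmod 2$ at once.

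Combining the two directions with the observation about the translation part yields $p\equiv q\pmod 2 \iff \varphi(p)\equiv\varphi(q)\pmod 2$, which is the claim. There is no real obstacle in this argument; the only point worth stating explicitly is the fact that an integer matrix of determinant $\pm1$ has an integer inverse (equivalently, that it stays invertible after reduction modulo $2$), which is exactly what makes "congruence modulo $2$" an affine invariant rather than merely being preserved in one direction.
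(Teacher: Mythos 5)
Your proof is correct and follows essentially the same route as the paper: the forward implication is immediate from $A$ having integer entries, and the converse comes from applying the same observation to the inverse (the paper invokes $\varphi^{-1}\in\Aff(d,\Z)$, you invoke $A^{-1}\in\GL(d,\Z)$, which is the same fact). Your remark about reduction into $\GL(d,\Z/2\Z)$ is a pleasant repackaging but not a substantively different argument.
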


\begin{proof}
  The affinity $\varphi$ will be of the form $x\mapsto Ax+b$, for some $A\in\GL(d,\Z)$ and $b\in\Z^d$.
  The condition $p\equiv q \pmod 2$ is equivalent to $p=q+2v$ for some $v\in \Z^d$.
  Applying $\varphi$ to both sides we obtain $\varphi(p) = \varphi(q+2v) = \varphi(q) + 2Av$, so $\varphi(p)\equiv\varphi(q) \pmod 2$.
  The same argument applied to $\varphi^{-1}$ proves the converse implication. \qed
\end{proof}

\begin{theorem}
  \label{thm:correctness-canonical-frames}
  Algorithm \ref{alg:canonical-frames} terminates, and it equivariantly returns a set of complete $\Lambda$-frames.
\end{theorem}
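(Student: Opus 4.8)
I would prove all three assertions together by well-founded induction on the pair $\big(|\Lambda|,\diam(\Lambda)\big)\in\N\times\N$, ordered lexicographically, where $\diam(\Lambda)=\max_{p,q\in\Lambda}\|p-q\|_\infty$. The first step is to check that every recursive call of \textsc{EquivariantFrames} is on a set with strictly smaller pair, which at the same time gives termination. If $\Lambda$ enters the branch in which all its points are congruent modulo $2$, then $|\Lambda|\ge 2$ (the singleton case has already returned), so every pairwise difference is a nonzero even vector, $\diam(\Lambda)$ is a positive even integer, and the recursive call is on $\Lambda'=(\Lambda-p)/2$, which has the same cardinality but $\diam(\Lambda')=\tfrac12\diam(\Lambda)<\diam(\Lambda)$. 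If $\Lambda$ enters the other branch, the congruence classes $\Lambda_1,\dots,\Lambda_h$ partition $\Lambda$ into $h\ge 2$ nonempty parts, so $|\Lambda_i|<|\Lambda|$. In both branches the pair drops, and the remaining non-recursive work---forming $\Lambda'$ and selecting $F$ among the finitely many elements of $\F_c(\Lambda')$---is finite; hence the algorithm halts, and this well-order is a legitimate induction parameter. For the equivariance part the induction hypothesis is taken with the affinity universally quantified: for every set $\Lambda^*$ with strictly smaller pair and every $\psi\in\Aff(d,\Z)$, the output of \textsc{EquivariantFrames} on $\psi(\Lambda^*)$ equals $\psi$ applied to its output on $\Lambda^*$.

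\textbf{The output is a nonempty set of complete $\Lambda$-frames.} The base case $\{\Lambda\}$ with $|\Lambda|=1$ is clear, a one-point set being affinely independent and equal to its own span. In the congruence branch, $x\mapsto 2x+p$ is an affine bijection of $\Q^d$ carrying $\Lambda'$ onto $\Lambda$; affine bijections preserve affine independence and commute with $\Span$ and with inclusion, so they carry complete $\Lambda'$-frames to complete $\Lambda$-frames, and the assertion follows from the inductive hypothesis applied to $\Lambda'$. In the partition branch, each $S_i$ contains (inductively) a complete $\Lambda_i$-frame, so $\Span(\Lambda')\supseteq\bigcup_i\Lambda_i=\Lambda$; together with $\Lambda'\subseteq\Lambda$ this gives $\Span(\Lambda')=\Span(\Lambda)$, whence $\F_c(\Lambda')\subseteq\F_c(\Lambda)$ and $\F_c(\Lambda')\ne\varnothing$. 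Therefore $F$, a nonempty set of minimizers over the finite nonempty set $\F_c(\Lambda')$, is a nonempty subset of $\F_c(\Lambda)$; the calls to \textsc{CanonicalFormWithFrame} on $(\Lambda,T)$ are legitimate because every $T\in\F_c(\Lambda')\subseteq\F_c(\Lambda)$ is in particular $\Lambda$-covering.

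\textbf{Equivariance.} Let $\tilde\Lambda=\varphi(\Lambda)$ with $\varphi\colon x\mapsto Ax+b$ in $\Aff(d,\Z)$. By Lemma~\ref{lemma:mod-2}, being congruent modulo $2$ is an affine invariant, so $\Lambda$ and $\tilde\Lambda$ enter the same branch, and the base case is immediate. In the partition branch, Lemma~\ref{lemma:mod-2} shows $\varphi$ induces a bijection between the congruence classes of $\Lambda$ and those of $\tilde\Lambda$; applying the inductive hypothesis to each $\Lambda_i$ with the affinity $\varphi$ gives $\bigcup_i\tilde S_i=\{\varphi(R):R\in\bigcup_i S_i\}$ as unordered sets of frames, hence $\tilde\Lambda'=\varphi(\Lambda')$. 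Then $\varphi$ restricts to a bijection $\F_c(\Lambda')\to\F_c(\tilde\Lambda')$, and by Theorem~\ref{thm:correctness-canonical-form-with-frame} the value of \textsc{CanonicalFormWithFrame} on $(\tilde\Lambda,\varphi(T))$ equals its value on $(\Lambda,T)$ for every $T\in\F_c(\Lambda')$; consequently the lexicographic minimization defining $F$ is transported bijectively onto the one defining $\tilde F$, so $\tilde F=\varphi(F)$. The congruence branch is the delicate one: the two runs choose \emph{arbitrary} base points $p\in\Lambda$ and $\tilde p\in\tilde\Lambda$ with no a priori relation. Writing $\tilde p=Aq+b$ with $q\in\Lambda$, the vector $(p-q)/2$ is integral because all points of $\Lambda$ are congruent modulo $2$, so $\varphi'\colon x\mapsto Ax+A(p-q)/2$ lies in $\Aff(d,\Z)$, and a short computation gives $\tilde\Lambda'=\varphi'(\Lambda')$. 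Applying the inductive hypothesis to $\Lambda'$ with the affinity $\varphi'$ yields $\tilde S=\varphi'(S)$, and since $2\varphi'(x)+\tilde p=2Ax+Ap+b=\varphi(2x+p)$ for all $x$, the returned set $2\tilde S+\tilde p$ equals $\varphi(2S+p)$, independently of the choices of $p$ and $\tilde p$.

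\textbf{Main obstacle.} I expect the only genuine subtlety to be this last point: reconciling the two independent, unconstrained choices of base point through the auxiliary affinity $\varphi'$, whose translation part $A(p-q)/2$ is integral exactly because of the modulo-$2$ hypothesis. The rest is routine, the main care being to treat the partition into congruence classes and the returned collections of frames as \emph{unordered} sets throughout.
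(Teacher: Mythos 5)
Your proof is correct and follows essentially the same route as the paper's: induction on the pair (size, diameter), Lemma \ref{lemma:mod-2} to show both inputs take the same branch and that the partition is equivariant, Theorem \ref{thm:correctness-canonical-form-with-frame} for the invariance of the minimization criterion, and the same auxiliary affinity $x\mapsto Ax+A\big(\tfrac{p-q}{2}\big)$ to reconcile the independent choices of base points in the congruence branch. Your explicit remarks on the integrality of $(p-q)/2$, the nonemptiness of $F$, and the well-foundedness of the induction order are welcome refinements of the paper's argument rather than a different approach.
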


\begin{proof}
  In each recursive call of \Call{EquivariantFrames}{}, either the diameter of $\Lambda$ is halved or the size of $\Lambda$ decreases: the former case occurs if the condition of line \ref{line:congruence-condition} holds; the latter occurs in the other case (if not all the points of $\Lambda$ are congruent modulo $2$, then the partition of line \ref{line:partition} is made of at least two non-empty subsets).
  As long as the size of $\Lambda$ remains greater than $1$, the diameter of $\Lambda$ is $\geq 1$.
  So neither of the two above possibilities can happen infinitely many times, and the algorithm eventually terminates.
  
  We prove the rest of the statements by induction on the size $n$ of $\Lambda$ and its diameter.
  The base step is $n=1$, for which the claim is obvious.
  We may now assume $n>1$. We will distinguish two cases.
  \begin{itemize}
    \item First case: the condition of line \ref{line:congruence-condition} holds.
    Clearly, by induction, the frames returned in line \ref{line:return-frames} are complete $\Lambda$-frames.
    Suppose now that $\Lambda$ is replaced by $\tilde\Lambda = \varphi(\Lambda)$, where $\varphi$ is the affinity given by $x\mapsto Ax+b$. In line \ref{line:choice-of-p}, a point $\tilde p = \varphi(q)$ is chosen, for some $q\in\Lambda$.
    The set $\tilde\Lambda'$ calculated in line \ref{line:shrink} is given by
    \begin{IEEEeqnarray*}{rCl}
      \tilde\Lambda' &=& \frac{\tilde\Lambda - \tilde p}{2} = \frac{\varphi(\Lambda) - \varphi(q)}{2} = \frac{A(\Lambda) - A(q)}{2} \\
      &=& A \left(\frac{\Lambda-p}{2}\right) + A \left(\frac{p-q}{2}\right) = A (\Lambda') + A\left(\frac{p-q}{2}\right).
    \end{IEEEeqnarray*}
    Then the sets $\Lambda'$ and $\tilde\Lambda'$ can be obtained one from the other by applying the affinity $x\mapsto Ax + A\left(\frac{p-q}{2}\right)$.
    Here notice that $\frac{p-q}{2}\in\Z^d$, so the constructed transformation is really an affinity that preserves the lattice $\Z^d$.
    By induction, the set of frames $\tilde S$ calculated in line \ref{line:first-call} is equivariant.
    Consequently,
    \[ \tilde S = A(S) + A\left(\frac{p-q}{2}\right). \]
    Finally, the return value of the call \Call{EquivariantFrames}{$\tilde\Lambda$} is
    \begin{IEEEeqnarray*}{rCl}
      2\tilde S + \tilde p &=& 2A(S) + 2A\left(\frac{p-q}{2}\right) + \varphi(q) \\
      &=& 2A(S) + A(p) - A(q) + A(q) + b \\
      &=& A( 2S + p ) + b \\
      &=& \varphi( 2S+p ),
    \end{IEEEeqnarray*}
    which is what we wanted.
    
    \item Second case: the condition of line \ref{line:congruence-condition} does not hold.
    The sets in $F$ are complete $\Lambda'$-frames. By construction, $\Lambda'$ is a subset of $\Lambda$ and $\Span(\Lambda') = \Span(\Lambda)$. So the sets in $F$ are also complete $\Lambda$-frames.
    By Lemma \ref{lemma:mod-2}, the partition computed in line \ref{line:partition} is $\Aff(d,\Z)$-equivariant (notice that such partition is an unordered set). Thus the unordered set $\{S_1, \dots, S_h\}$ is also equivariant, and so is the union $\Lambda'$.
    Finally minimizing \Call{CanonicalFormWithFrame}{$\Lambda, T$} is an equivariant condition since \Call{CanonicalFormWithFrame}{$\Lambda, T$} is $\Aff(d,\Z)$-invariant (by Theorem \ref{thm:correctness-canonical-form-with-frame}), so $F$ is itself equivariant. \qed
  \end{itemize}
\end{proof}

Unfortunately, when the partition found in line \ref{line:partition} of Algorithm \ref{alg:canonical-frames} is very unbalanced (for instance, if $|\Lambda_1| = 1$ and $|\Lambda_2|=n-1$), the depth of the tree of the recursive calls can grow linearly with $n$.
Then the overall complexity can be quadratic in $n$, as each recursive call takes linear time.
This is already good compared with the $O(n^d)$ trivial algorithm, but we are going to present a variant to Algorithm \ref{alg:canonical-frames} with almost linear worst-case asymptotic complexity (as claimed in Section \ref{sec:preliminaries}).
The idea is to change what is done in lines \ref{line:frames1}-\ref{line:frames2}, which is a quite rough way to find an equivariant set of frames.
To do this, the new recursive function \Call{EquivariantFrames2}{$\Lambda, Q$} takes one more argument, a frame $Q$, and equivariantly returns a nonempty set $S$ of $\Lambda$-frames such that, for each $R\in S$, $Q\cup R$ is a $\Lambda$-covering frame.
Similarly to Algorithm \ref{alg:canonical-frames}, here by ``equivariantly'' we mean that, for any $\varphi\in\Aff(d,\Z)$, we have
\[ \Call{EquivariantFrames2}{\varphi(\Lambda), \varphi(Q)} = \varphi\big( \Call{EquivariantFrames2}{\Lambda,Q} \big). \]

In what follows, we say that a partition $\{\Lambda_1,\dots,\Lambda_h\}$ of $\Lambda$ is \emph{balanced} if $|\Lambda_i|\leq n/2$ for all $i$, where $n$ is the size of $\Lambda$. Otherwise, we say that the partition is \emph{unbalanced}.

\begin{algorithm}[Equivariant set of $\Lambda$-frames]
  \label{alg:canonical-frames2}
  This algorithm takes as input a finite set $\Lambda \subseteq \Z^d$ and a frame $Q\subseteq \Q^d$, and equivariantly returns a nonempty set $S$ of $\Lambda$-frames, with the property that $Q\cap R=\varnothing$ and $Q\cup R$ is a $\Lambda$-covering frame for each $R\in S$.
  \begin{alg}
    \Function{EquivariantFrames2}{$\Lambda, Q$}
      \State $\Lambda \gets \Lambda \setminus \Span(Q)$ \label{line:begin}
      \If{$|\Lambda| \leq 1$}
	\State \Return $\{\Lambda\}$ \label{line:return-base-case}
      \EndIf \label{line:end-first-if}
      \If{all the points of $\Lambda$ are congruent modulo $2$} \label{line:congruence-condition2}
	\State $p \gets $ any point of $\Lambda$ \label{line:first-case-begin}
	\State $\Lambda' \gets (\Lambda - p) / 2$
	\State $Q' \gets (Q - p) / 2$
	\State $S' \gets $ \Call{EquivariantFrames2}{$\Lambda', Q'$} \label{line:first-call2}
	\State $S \gets 2S' + p$
	\State \Return $S$ \label{line:first-case-end}
      \Else
	\State $\{\Lambda_1,\dots,\Lambda_h\} \gets$ partition of $\Lambda$ in congruence classes modulo 2 \label{line:second-case-begin}
	\State sort $\{\Lambda_1,\dots,\Lambda_h\}$ by size \Comment{after this, we have $|\Lambda_1|\leq \dots\leq |\Lambda_h|$} \label{line:sort}
	\If{$|\Lambda_h| \leq n/2$} \Comment{the partition is balanced}
	  \For{$i\in \{1,\dots, h\}$} \label{line:balanced-begin}
	    \State $S_i \gets$ \Call{EquivariantFrames2}{$\Lambda_i, Q$} \label{line:second-call2}
	  \EndFor
	  \State $\Lambda' \gets \bigcup_{i=1}^h \bigcup_{V \in S_i} V$ \label{line:big-union}
	  \State $F \gets \{ \, R\in\F(\Lambda') \,\mid\, Q\cap R =\varnothing, \; Q\cup R \text{ is a $\Lambda$-covering frame,}$ \label{line:F-first-assignment}
	  \State \qquad and \Call{CanonicalFormWithFrame}{$\Lambda,Q\cup R$} is
	  \State \qquad lexicographically minimal $\}$ \label{line:balanced-end}
	\Else \Comment{the partition is unbalanced}
	  \For{$i\in \{1,\dots, h-1\}$} \label{line:unbalanced-begin}
	    \State $S_i \gets$ \Call{EquivariantFrames2}{$\Lambda_i, Q$} \label{line:third-call2}
	  \EndFor
	  \State $\Lambda' \gets \bigcup_{i=1}^{h-1} \bigcup_{V \in S_i} V$ \label{line:lambda'}
	  \State $E \gets \{ \, R\in\F(\Lambda') \,\mid\, Q\cap R=\varnothing \text{ and } Q\cup R \text{ is a }$ \label{line:E-beginning}
	  \State \qquad $(\Lambda\setminus\Lambda_h)$-covering frame $\}$\label{line:E}
	  \ForAll{$R\in E$}
	    \State $S_R \gets \Call{EquivariantFrames2}{\Lambda_h, Q \cup R}$ \label{line:fourth-call2}
	  \EndFor
	  \State $F \gets \{ R \cup T \in \F(\Lambda') \mid R\in E,\, T\in S_R, \text{ and}$ \label{line:F-second-assignment}
	  \State \qquad $\Call{CanonicalFormWithFrame}{\Lambda, Q\cup R\cup T} \text{ is}$
	  \State \qquad lexicographically minimal $\}$ \label{line:unbalanced-end}
	\EndIf
	\State \Return $F$ \label{line:second-case-end}
      \EndIf
    \EndFunction
  \end{alg}
\end{algorithm}

Let us describe briefly the steps of Algorithm \ref{alg:canonical-frames2}.
As in Algorithm \ref{alg:canonical-frames}, the main distinction is given by the condition in line \ref{line:congruence-condition2} (whether all the points of $\Lambda$ are congruent modulo 2, or not).

If the condition in line \ref{line:congruence-condition2} holds (lines \ref{line:first-case-begin}-\ref{line:first-case-end}), all the points of $\Lambda$ and $Q$ are translated and their coordinates are then divided by 2. The recursive call in line \ref{line:first-call2} gives an unordered set $S'$ of $\Lambda'$-frames, which is then transformed to a set $S$ of $\Lambda$-frames.

Lines \ref{line:second-case-begin}-\ref{line:second-case-end} are executed if the condition in line \ref{line:congruence-condition2} is not verified.
If the partition of line \ref{line:second-case-begin} is balanced (lines \ref{line:balanced-begin}-\ref{line:balanced-end}), the behaviour is similar to that of Algorithm \ref{alg:canonical-frames}: the function \Call{EquivariantFrames2}{} is called recursively for each subset of the partition (line \ref{line:second-call2}), and then the results are put together in lines \ref{line:big-union}-\ref{line:balanced-end}.
If the partition is unbalanced (lines \ref{line:unbalanced-begin}-\ref{line:unbalanced-end}), the largest subset ($\Lambda_h$) is treated separately (notice that the loop in line \ref{line:unbalanced-begin} ranges from 1 to $h-1$, not from 1 to $h$): as we will see, the reason for this is to increase the asymptotic efficiency.

\begin{remark}
  If $Q=\varnothing$, then Algorithm \ref{alg:canonical-frames2} equivariantly returns a set of complete $\Lambda$-frames, exactly as Algorithm \ref{alg:canonical-frames} does. However, the two outputs may differ: Algorithms \ref{alg:canonical-frames} and \ref{alg:canonical-frames2} calculate different ``equivariant forms''.
\end{remark}

\begin{remark}
  The number of frames $R$ considered in lines \ref{line:F-first-assignment}-\ref{line:balanced-end} (and, similarly, in lines \ref{line:E-beginning}-\ref{line:E}) can be considerably reduced in many ways.
  For instance, one can consider frames obtained as a concatenation of frames in the union of the $S_i$'s.
  Plus, if it is possible to distinguish the $\Lambda_i$'s in some way (e.g.\ because they have different size, or because the $S_i$'s have different size), one can reduce the number of concatenations to consider.
  However this does not affect the asymptotic complexity for fixed $d$, so we chose to present a simpler and more naive approach.
\end{remark}

In the rest of this section we will prove the correctness of Algorithm \ref{alg:canonical-frames2} and we will analyze its asymptotic complexity.

\begin{theorem}
  \label{thm:correctness-canonical-frames2}
  Algorithm \ref{alg:canonical-frames2} terminates, and the set of frames it returns is equi\-va\-riant.
\end{theorem}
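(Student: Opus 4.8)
The plan is to follow the structure of the proof of Theorem \ref{thm:correctness-canonical-frames}, carrying along the extra frame argument $Q$ and handling the new balanced/unbalanced split introduced on lines \ref{line:sort}--\ref{line:unbalanced-end}.

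First I would dispose of \textbf{termination}. The reassignment on line \ref{line:begin} can only shrink $\Lambda$, so it is harmless; and, exactly as in the proof of Theorem \ref{thm:correctness-canonical-frames}, I would observe that in each recursive call either the diameter of $\Lambda$ is halved (this happens on line \ref{line:first-call2}, which is reached only when all points of $\Lambda$ are congruent modulo $2$, so that $(\Lambda-p)/2\subseteq\Z^d$) or the size of $\Lambda$ strictly decreases (on lines \ref{line:second-call2}, \ref{line:third-call2} and \ref{line:fourth-call2}; here one uses that $h\geq 2$ whenever line \ref{line:sort} is reached, so in particular $|\Lambda_h|\leq n-1<n$). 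Since $\diam(\Lambda)\geq 1$ as long as $|\Lambda|\geq 2$, neither possibility can occur infinitely often, so the recursion halts.

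Next I would prove \textbf{correctness together with equivariance}, by induction on the pair $(|\Lambda|,\diam(\Lambda))$ ordered lexicographically. Since $\Span(\varphi(Q))=\varphi(\Span(Q))$, the reassignment on line \ref{line:begin} is equivariant, and deleting the points of $\Lambda$ lying in $\Span(Q)$ does not change whether $Q\cup R$ is a covering frame for the original input; so I may assume $\Lambda\cap\Span(Q)=\varnothing$. In the base case $|\Lambda|\leq 1$, the output $\{\Lambda\}$ visibly has the required properties (the empty set is vacuously a frame; a point $p\notin\Span(Q)$ extends $Q$ to a frame whose affine span contains $\Lambda$). In the congruence case (lines \ref{line:first-case-begin}--\ref{line:first-case-end}) I would reuse the argument of the first case of Theorem \ref{thm:correctness-canonical-frames}: the affinity $x\mapsto 2x+p$ transports $(\Lambda',Q')$ to $(\Lambda,Q)$ and preserves affine independence, disjointness, affine spans and inclusions, so the induction hypothesis for $(\Lambda',Q')$ yields the three output properties for $S=2S'+p$; for equivariance one checks, as there, that if the point chosen for $\varphi(\Lambda)$ is $\varphi(q)$ then $(\tilde\Lambda',\tilde Q')$ differs from $(\Lambda',Q')$ by the single affinity $x\mapsto Ax+A\bigl(\tfrac{p-q}{2}\bigr)$, whose offset cancels after applying $2(\cdot)+\varphi(q)$. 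In the balanced case (lines \ref{line:balanced-begin}--\ref{line:balanced-end}), each $\Lambda_i$ has $|\Lambda_i|\leq n/2<n$, so by induction every $S_i$ is a nonempty equivariant set of $\Lambda_i$-frames $V$ with $Q\cap V=\varnothing$ and $Q\cup V$ a $\Lambda_i$-covering frame; then $\Lambda=\bigcup_i\Lambda_i\subseteq\Span(Q\cup\Lambda')$ and $\Lambda'\cap\Span(Q)=\varnothing$, so $Q$ extends by points of $\Lambda'$ to a $\Lambda$-covering frame, the set minimized over on lines \ref{line:F-first-assignment}--\ref{line:balanced-end} is nonempty, and $F$ consists of $\Lambda$-frames $R$ with $Q\cap R=\varnothing$ and $Q\cup R$ a $\Lambda$-covering frame; equivariance of $F$ then follows, exactly as in the second case of Theorem \ref{thm:correctness-canonical-frames}, from Lemma \ref{lemma:mod-2} (the partition and the comparison of sizes are equivariant), the induction hypothesis (so $\Lambda'$ is equivariant), and the $\Aff(d,\Z)$-invariance of \Call{CanonicalFormWithFrame}{$\Lambda,Q\cup R$} given by Theorem \ref{thm:correctness-canonical-form-with-frame}.

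The step I expect to be the \textbf{main obstacle} is the unbalanced case (lines \ref{line:unbalanced-begin}--\ref{line:unbalanced-end}). The crucial point I would isolate is that $|\Lambda_h|>n/2$ forces $\Lambda_h$ to be the \emph{unique} class of maximal size (two classes of size $>n/2$ cannot coexist), so $\Lambda_h$ is determined equivariantly — this is exactly what makes its asymmetric treatment compatible with equivariance; moreover each of $\Lambda_1,\dots,\Lambda_h$ has size $<n$ (as $h\geq 2$), so the induction hypothesis applies to all recursive calls. As in the balanced case, induction on line \ref{line:third-call2} gives that $Q\cup\Lambda'$ is $(\Lambda\setminus\Lambda_h)$-covering with $\Lambda'\cap\Span(Q)=\varnothing$, so $E$ is nonempty; for each $R\in E$ the pair $(\Lambda_h,Q\cup R)$ is legitimate input (as $Q\cup R$ is a frame), and by induction $S_R$ is a nonempty equivariant set of $\Lambda_h$-frames $T$ with $(Q\cup R)\cap T=\varnothing$ and $(Q\cup R)\cup T$ a $\Lambda_h$-covering frame. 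The delicate bookkeeping is then to verify that each $R\cup T$ occurring in $F$ is a $\Lambda$-frame with $Q\cap(R\cup T)=\varnothing$ and $Q\cup R\cup T$ a $\Lambda$-covering frame: indeed $R\cup T\subseteq\Lambda$ since $R\subseteq\Lambda'\subseteq\Lambda\setminus\Lambda_h$ and $T\subseteq\Lambda_h$; $Q\cup R\cup T$ is affinely independent since it is a subset of the frame $(Q\cup R)\cup T$; $Q\cap R=\varnothing$ comes from $E$ and $Q\cap T\subseteq(Q\cup R)\cap T=\varnothing$; and $\Lambda=(\Lambda\setminus\Lambda_h)\cup\Lambda_h\subseteq\Span(Q\cup R)\cup\Span(Q\cup R\cup T)=\Span(Q\cup R\cup T)$; nonemptiness of $F$ is then immediate. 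Finally I would verify equivariance by pushing $\varphi$ through the construction: by induction the set $\Lambda'$ built for $\varphi(\Lambda)$ is $\varphi(\Lambda')$; the corresponding set $E$ equals $\varphi(E)$ because $\varphi$ is a bijection of $\Z^d$ carrying frames to frames, commuting with $\Span$, and satisfying $\varphi(\Lambda\setminus\Lambda_h)=\varphi(\Lambda)\setminus\varphi(\Lambda_h)$; for $R\in E$ the recursive call \Call{EquivariantFrames2}{$\varphi(\Lambda_h),\varphi(Q\cup R)$} returns $\varphi(S_R)$ by induction; and hence the final $F$ equals $\varphi(F)$, using once more the invariance of \Call{CanonicalFormWithFrame}{$\Lambda,Q\cup R\cup T$} to see that lexicographic minimality is preserved. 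Apart from the uniqueness of $\Lambda_h$ and this last chain of identities, everything is a routine adaptation of the proof of Theorem \ref{thm:correctness-canonical-frames}.
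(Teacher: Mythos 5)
Your proposal is correct and follows essentially the same route as the paper: the paper's own proof is a short sketch that defers to the argument for Theorem \ref{thm:correctness-canonical-frames} and singles out exactly the point you isolate, namely that a class of size $>n/2$ is necessarily unique, hence equivariantly determined, so the asymmetric treatment of $\Lambda_h$ is harmless. Your additional bookkeeping (line \ref{line:begin}, the carried frame $Q$, nonemptiness of $E$, $S_R$ and $F$) just fills in details the paper leaves implicit.
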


\begin{proof}
  The arguments are similar to those of Theorem \ref{thm:correctness-canonical-frames}.
  The only main dif\-fe\-rence is given by how the set $\Lambda_h$ is treated separately in lines \ref{line:unbalanced-begin}-\ref{line:unbalanced-end}: however, since there can only be one subset of size greater than $n/2$, if such a subset exists then it is equivariant (because the partition itself is equivariant, as we already pointed out in the proof of Theorem \ref{thm:correctness-canonical-frames}). So, treating $\Lambda_h$ differently from the other subsets of the partition does not violate the equivariance of the procedure. \qed
\end{proof}

\begin{lemma}
  \label{lemma:limited}
  For each dimension $d>0$ there exists $h_d>0$ such that, for any $\Lambda'\subseteq \Lambda\subseteq \Z^d$ and frame $Q$ such that $\Lambda \subseteq \Span(Q\cup \Lambda')$, and for any pair $(\Omega, U)\in Y_d$, the set
  \begin{IEEEeqnarray*}{rCl}
    \mathcal{S} = \{ R \in \F(\Lambda') &\,\mid\,& Q\cap R = \varnothing, \; Q\cup R \text{ is a $\Lambda$-covering frame, and} \\ && \Call{CanonicalFormWithFrame}{\Lambda,Q\cup R} = (\Omega,U) \}
  \end{IEEEeqnarray*}
  has size $\leq h_d$.
\end{lemma}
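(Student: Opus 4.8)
The plan is to bound the size of $\mathcal{S}$ by showing that each $R\in\mathcal{S}$ is determined, up to finitely many choices, by the data $(\Omega,U)$ together with the combinatorial/arithmetic structure of $\Lambda$, $\Lambda'$, and $Q$. The key observation is that \textsc{CanonicalFormWithFrame}$(\Lambda, Q\cup R)$ applies a specific affinity $\psi_R$ (of the form $x\mapsto A_R(x-p_0)$, built from a Hermite normal form computation) to the pair $(\Lambda, T_R)$, where $T_R$ is the complete $\Lambda$-frame extracted inside that call. If two frames $R_1, R_2\in\mathcal{S}$ give the same output $(\Omega,U)$, then $\psi_{R_1}(\Lambda) = \psi_{R_2}(\Lambda) = \Omega$, so $\varphi := \psi_{R_2}^{-1}\circ\psi_{R_1}$ is an affinity of $\Z^d$ fixing $\Lambda$ setwise. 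Thus I would first prove that the number of affinities $\varphi\in\Aff(d,\Z)$ with $\varphi(\Lambda)=\Lambda$ is finite — in fact bounded by a constant depending only on $d$ — and then argue that each such $\varphi$, acting on the frame $R_1$, can produce at most one $R_2\in\mathcal{S}$ with the same canonical form.

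\emph{Finiteness of the stabilizer.} An affinity fixing $\Lambda$ permutes the $n = |\Lambda|$ points of $\Lambda$, and since $\Lambda$ contains a complete $\Lambda$-frame (an affine basis of $\Span(\Lambda)$), the affinity is determined by the induced permutation of that frame; hence the stabilizer of $\Lambda$ in $\Aff(d,\Z)$ embeds into the symmetric group on $n$ letters, and in particular is \emph{finite}. This alone, however, only gives a bound depending on $n$, not the uniform constant $h_d$ the lemma demands. So the real work is to replace $n$ by a bound independent of $n$. For this I would exploit the precise way $R$ sits inside the algorithm: recall (via the Remark after Theorem~\ref{thm:correctness-canonical-form-with-frame}) that line~\ref{line:initialize-T} initializes $T$ with $Q\cup R$ restricted to $\Lambda$, \emph{ordered as in $Q\cup R$}. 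Because $Q$ is fixed and the output $U = \psi_R(T_R)$ is fixed, the positions of the points of $R$ inside $\Omega$ are prescribed: the point $\psi_R(r)$ for $r\in R$ must be the corresponding entry of $U$. Since $\psi_R$ is a bijection $\Lambda\to\Omega$, and since a $\Lambda$-frame has size at most $d+1$ (it is affinely independent in $\Q^d$), the frame $R$ — having size at most $d+1-|Q|$ — is pinned down once we know $\psi_R$: namely $R = \psi_R^{-1}(U\setminus \psi_R(Q))$, reading off the relevant coordinates of $U$.

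\emph{Counting the affinities $\psi_R$.} It therefore suffices to bound the number of distinct affinities $\psi_R$ that can arise as $R$ ranges over $\mathcal{S}$. Here I would use that $\psi_R$ is reconstructible from $(\Omega, U)$ up to finitely many choices. Indeed $U$ is a complete $\Omega$-frame of size $k+1$ (where $k = \dim\Span(\Lambda)$), so $U$ is an affine basis of $\Span(\Omega)$; and $\psi_R$ carries the complete $\Lambda$-frame $T_R$ onto $U$ in order. The number of complete $\Lambda$-frames inside any fixed set of $n$ points is again only bounded by a function of $n$ — so to get uniformity I would instead argue directly at the level of linear maps: two maps $\psi_{R_1},\psi_{R_2}$ agreeing on the output $(\Omega,U)$ differ by a lattice affinity stabilizing $\Lambda$ \emph{and} fixing $U$ pointwise, but $U$ is an affine spanning set of $\Span(\Lambda)$ (after identification), so such an affinity is the identity on $\Span(\Lambda)$; the only remaining freedom is in the action transverse to $\Span(\Lambda)$, which contributes a factor bounded purely in terms of $d$ (the number of unipotent-type extensions of $\id$ on a rank-$k$ sublattice to $\GL(d,\Z)$ is infinite in general, but the constraint that $\Omega$ lies in $\Span(U)$ and that the HNF normalization on line~\ref{line:hermite} fixes the transverse block makes it finite — this is exactly the point where the HNF, not an arbitrary completion, matters). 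Combining: $|\mathcal{S}|$ is bounded by (number of orderings/choices of affine basis within $\Span(U)$ compatible with being a complete $\Lambda$-frame) $\times$ (transverse extensions) $\le (d+1)!\cdot c_d =: h_d$.

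\emph{Expected main obstacle.} The genuinely delicate step is the uniformity — getting a bound $h_d$ that does \emph{not} depend on $n$. Naively every argument I sketched produces a bound in terms of $n$ (number of permutations of $\Lambda$, number of complete $\Lambda$-frames, etc.). The resolution must come from the rigidity of an \emph{affine basis}: a frame $R$ with $Q\cup R$ a $\Lambda$-covering frame has $|Q\cup R| = k+1$, and once the \emph{ordered} image $\psi_R(Q\cup R) = $ (fixed part coming from $Q$) $\cup$ (a prescribed sub-tuple of $U$) is known, both $\psi_R$ and hence $R = \psi_R^{-1}(\text{that sub-tuple})$ are determined. So the count reduces to: in how many ways can the ordered complete $\Lambda$-frame $Q\cup R$ map, in order, onto an ordered affine basis contained in $\{$the output frame $U\} \cup \{\psi_R(Q)\}$? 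Since $|U| = k+1 = |Q\cup R|$ and $\psi_R(Q)$ is forced (it is $\psi_R$ applied to the fixed $Q$, and by the same rigidity $\psi_R|_Q$ is determined by where it must land, an affinely independent sub-tuple of the fixed $U$), the number of admissible ordered sub-tuples of $U$ of the right length is at most $(k+1)!\le (d+1)!$, and for each the affinity is unique; hence $h_d = (d+1)!$ works, with no residual transverse factor once one notes $\Lambda\subseteq\Span(Q\cup R)$ forces $\psi_R$ to be determined on all of $\Span(\Lambda)$ and the HNF convention fixes the rest. I would present the proof in this streamlined form, with the stabilizer-finiteness observation used only as motivation and the frame-rigidity count as the actual argument.
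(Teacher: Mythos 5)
There is a genuine gap, and it sits exactly where you yourself flagged the difficulty: the uniformity in $n$. Your observation that $R$ is recoverable from the affinity it induces (via the initialization of $T$ on line \ref{line:initialize-T} of Algorithm \ref{alg:canonical-form-with-frame}) matches the paper's injectivity step, but your count of the possible affinities is incorrect. For $R_1,R_2\in\mathcal{S}$, the composite $\xi=\psi_{R_2}^{-1}\circ\psi_{R_1}$ does stabilize $\Lambda$, but it does \emph{not} fix $U$ (or any frame) pointwise: it carries the complete $\Lambda$-frame $T_{R_1}$ built inside the call on $(\Lambda,Q\cup R_1)$ onto the frame $T_{R_2}$ built inside the call on $(\Lambda,Q\cup R_2)$, and these frames differ exactly when $R_1\neq R_2$. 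So the conclusion that $\xi$ is the identity on $\Span(\Lambda)$, and the resulting bound $h_d=(d+1)!$ obtained by counting ordered sub-tuples of $U$, are unjustified --- and in fact false. Take $\Lambda=\Lambda'=\{\pm e_1,\dots,\pm e_d\}$, $Q=\varnothing$, fix a complete $\Lambda$-frame $R_0$ and put $(\Omega,U)=\textsc{CanonicalFormWithFrame}(\Lambda,R_0)$. Every signed permutation $\vartheta$ preserves $\Lambda$, so by the invariance of Theorem \ref{thm:correctness-canonical-form-with-frame} the ordered frame $\vartheta(R_0)$ again lies in $\mathcal{S}$, and distinct $\vartheta$ give distinct frames because $R_0$ affinely spans $\Q^d$; hence $|\mathcal{S}|\geq 2^d\, d!$, which exceeds $(d+1)!$ already for $d\geq 2$.

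What is missing is precisely the ingredient that makes the bound independent of $n$: a bound, depending only on $d$, on the order of the affine stabilizer of $\Lambda$. The paper's route is: reduce to $\Span(\Lambda)=\Q^k$ by an HNF change of coordinates; note that the stabilizer $G\leq\Aff(k,\Z)$ of $\Lambda$ is finite (each element is determined by the permutation it induces on $\Lambda$, so $|G|\leq n!$); deduce that $G$ contains no nontrivial translations and therefore embeds into $\GL(k,\Z)$ via the linear part; and then invoke Minkowski's theorem that finite subgroups of $\GL(d,\Z)$ have order at most a constant $h_d$. Combined with the injection $\mathcal{S}\to G$ (your frame-determination argument), this gives the lemma. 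Your proposal contains the injection but replaces the Minkowski-type bound with a rigidity count of ordered bases inside $U$, which cannot work: $|\mathcal{S}|$ genuinely grows with the symmetry group of $\Lambda$, and only a uniform bound on finite subgroups of $\GL(d,\Z)$ (or an equivalent result) closes the argument.
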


\begin{proof}
  Let $n=|\Lambda|$ and $k = \dim\Span(\Lambda)$.
  First of all, we reduce to the case $\Span\Lambda = \Q^k$ (where $\Q^k$ is the affine subspace of $\Q^d$ consisting of the points with the last $d-k$ coordinates equal to zero).
  Notice that, if we change both $\Lambda$ and $Q$ by an affinity $\varphi\in\Aff(d,\Z)$, then $\mathcal{S}$ changes also by $\varphi$, and so its size remains the same.
  Let us choose the affinity $\varphi$ (of the form $x\mapsto A(x+b)$) as follows.
  \begin{itemize}
   \item $b=-q$, where $q$ is any fixed point of $\Lambda$.
   \item Let $M$ be the $d\times n$ matrix with columns given by the vectors $p-q$, for $p\in\Lambda$ (the columns can be arranged in any order).
   Then, let $A\in \GL(d,\Z)$ be such that $AM = \Call{HNF}{M}$.
  \end{itemize}
  Since the rank of $M$ is $k$, the Hermite normal form of $M$ has the last $d-k$ rows equal to zero. So the image of $\Lambda$ through the affinity $\varphi$ is included in $\Q^k$, as we wished.
  We can thus assume, from now on, that $\Span(\Lambda)=\Q^k$.
  
  Let $G$ be the subgroup of $\Aff(k,\Z)$ given by the affinities $\vartheta$ of $\Z^k\subseteq \Q^k$ such that $\vartheta(\Lambda)=\Lambda$. Since $\Lambda$ generates $\Q^k$ (as an affine space over $\Q$), an affinity $\vartheta\in G$ is completely determined by its restriction to $\Lambda$. Such a restriction is a permutation of $\Lambda$, by definition of $G$. So the order of $G$ is at most $n!$, and in particular $G$ is finite.
  
  We are now going to define an injective map $\chi\colon \mathcal{S} \to G$.
  Fix a frame $R_0\in \mathcal{S}$. Let $R$ be any frame in $\mathcal{S}$. By definition of $\mathcal{S}$, we have
  \[ \Call{CFWF}{\Lambda,Q\cup R} = \Call{CFWF}{\Lambda,Q\cup R_0} = (\Omega,U), \]
  where we have shortened ``\Call{CanonicalFormWithFrame}{}'' with ``\Call{CFWF}{}''.
  Let $T$ and $T_0$ be the complete $\Lambda$-frames constructed throughout the execution of the function \Call{CFWF}{} applied to $(\Lambda,Q\cup R)$ and $(\Lambda,Q\cup R_0)$, respectively. Recall that, as an immediate consequence of how the function \Call{CFWF}{} is defined, there exist (not necessarily unique) affinities $\psi,\psi_0\in\Aff(d,\Z)$ such that
  \[ \psi(\Lambda)=\Omega, \; \psi(T)=U, \; \psi_0(\Lambda)=\Omega, \; \psi_0(T_0)=U. \]
  Let $\xi=\psi^{-1}\circ\psi_0$. The situation is well explained by the following diagram.
  \begin{center}
  \begin{tikzpicture}[description/.style={fill=white,inner sep=2pt}]
  \matrix (m) [matrix of math nodes, row sep=2.5em,
  column sep=0.5em, text height=1.5ex, text depth=0.25ex, inner sep=6pt]
  { (\Lambda,T_0) & & (\Lambda,T) \\
    & (\Omega,U) & \\};
	  \path[->,font=\scriptsize]
		  (m-1-1) edge[|->] node[auto] {$\xi$} (m-1-3)
		  (m-1-1) edge[|->] node[below] {$ \psi_0\qquad $} (m-2-2)
		  (m-1-3) edge[|->] node[auto] {$ \psi $} (m-2-2);
  \end{tikzpicture}
  \end{center}
  So, $\xi(\Lambda)=\Lambda$ and $\xi(T_0)=T$. The affinity $\xi$ will be of the form $x\mapsto Bx+c$, for some $B\in\GL(d,\Z)$ and $c\in\Z^d$.
  Since $\xi(\Lambda)=\Lambda$, the submodule $\Z^k\subseteq \Z^d$ is mapped into itself by $\xi$. So $c$ belongs to $\Z^k$ (because it is the image of $0$, which belongs to $\Z^k$) and $B$ can be written as a block matrix in the following way:
  \[ B = \m{ B_1 & B_2 \\ 0 & B_3 }, \]
  where the block $B_1$ is $k\times k$ and the block $B_3$ is $(d-k)\times(d-k)$.
  Notice that, since $B$ is in $\GL(d,\Z)$, both $B_1$ and $B_3$ must have determinant equal to $1$ or $-1$.
  In particular, $B_1$ is in $\GL(k,\Z)$. Consequently, the affinity $\vartheta$ defined by $x\mapsto B_1x + c$ belongs to $\Aff(k,\Z)$.
  By construction, we also have that $\vartheta(\Lambda) = \Lambda$ and $\vartheta(T_0)=T$.
  Finally, we define
  \[ \chi(R) = \vartheta. \]
  
  As anticipated, we will now show that $\chi$ is injective.
  Suppose that we have $\chi(R_1)=\chi(R_2)=\vartheta$, for some $R_1,R_2\in\mathcal{S}$.
  Let $T_1$ and $T_2$ be the complete $\Lambda$-frames constructed throughout the execution of the function \Call{CFWF}{} applied to $(\Lambda,Q\cup R_1)$ and $(\Lambda,Q\cup R_2)$, respectively.
  As a consequence of what we proved above, $\vartheta(T_0)=T_1$ and $\vartheta(T_0)=T_2$. Thus, $T_1=T_2$.
  Assume now by contradiction that $R_1\neq R_2$. Since $R_1$ and $R_2$ are both subsets of $\Lambda$, and since $Q$ and $R_1$ (resp.\ $Q$ and $R_2$) are disjoint, we have that $(Q\cup R_1)\cap \Lambda \neq (Q\cup R_2)\cap \Lambda$.
  Then the values of $T_1$ and $T_2$, as they are initialized in line \ref{line:initialize-T} of Algorithm \ref{alg:canonical-form-with-frame}, are different. Therefore, $T_1$ and $T_2$ are different at the end of the execution too. This is a contradiction, because we proved that $T_1=T_2$.
  So $\chi$ is injective.
  
  Define now $\pi\colon G \to \GL(k,\Z)$ as the map that sends an affinity $\vartheta\in G$, of the form $x\mapsto Ex+c$, to its linear part $E\in\GL(k,\Z)$. The map $\pi$ is a group homomorphism, and $\ker\pi$ is the subgroup of $G$ consisting of the translations. Since $G$ is finite, it does not contain any non-trivial translation (because non-trivial translations have infinite order). This means that $\ker\pi$ is trivial, so $\pi$ is injective.
  
  By a classical result of Minkowski \cite{Minkowski1887}, for any fixed $d$ there exists a constant $h_d>0$ such that every finite subgroup of $\GL(d,\Z)$ has order $\leq h_d$ (see for instance \cite{serre2007bounds} for a proof in English).
  The group $G$ can be regarded (through the injective homomorphism $\pi$) as a subgroup of $\GL(d,\Z)$, so $G$ has order $\leq h_d$. Since we have built an injection $\chi\colon\mathcal{S}\to G$, the size of $\mathcal{S}$ is also $\leq h_d$. \qed
\end{proof}

\begin{lemma}
  \label{lemma:constant-frames}
  If the dimension $d$ is fixed, then the number of frames returned by any call to \Call{EquivariantFrames2}{} is $O(1)$. In other words, there exists a constant $h_d>0$ (depending on $d$) such that
  \[ | \Call{EquivariantFrames2}{\Lambda,Q} | \leq h_d \quad \forall\, \Lambda, Q. \]
\end{lemma}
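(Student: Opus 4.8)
The plan is to argue by induction on the (finite, by Theorem \ref{thm:correctness-canonical-frames2}) tree of recursive calls of Algorithm \ref{alg:canonical-frames2}, taking for $h_d$ the very constant furnished by Lemma \ref{lemma:limited} (enlarged to be at least $1$). Two of the cases are handled immediately. If, after the reduction on line \ref{line:begin}, we have $|\Lambda|\le 1$, the call returns the single frame $\{\Lambda\}$. If all points of $\Lambda$ are congruent modulo $2$, the returned set $S=2S'+p$ has the same cardinality as the output $S'$ of the recursive call \Call{EquivariantFrames2}{$\Lambda',Q'$} on line \ref{line:first-call2}, which is $\le h_d$ by the inductive hypothesis (that call is on a set of the same size and strictly smaller diameter). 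So everything comes down to bounding $|F|$ in the two subcases of lines \ref{line:second-case-begin}--\ref{line:second-case-end}.

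For that, the first thing I would record is a bookkeeping fact: by the contract of \Call{EquivariantFrames2}{} (Theorem \ref{thm:correctness-canonical-frames2}), together with the observation that line \ref{line:begin} only discards points already lying in the span of the second argument, every recursive call \Call{EquivariantFrames2}{$\Lambda_i,Q''$} made inside the current call returns frames that are subsets of $\Lambda_i$ and that, together with $Q''$, affinely span all of $\Lambda_i$. Summing these spanning relations over the partition then yields, in the balanced subcase, $\Lambda'\subseteq\Lambda$ and $\Lambda\subseteq\Span(Q\cup\Lambda')$; and in the unbalanced subcase, $\Lambda'\subseteq\Lambda\setminus\Lambda_h$ together with $\Lambda\setminus\Lambda_h\subseteq\Span(Q\cup\Lambda')$, hence also $\Lambda'\cup\Lambda_h\subseteq\Lambda$ and $\Lambda\subseteq\Span\big(Q\cup(\Lambda'\cup\Lambda_h)\big)$.

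In the balanced subcase, the set $F$ built on lines \ref{line:F-first-assignment}--\ref{line:balanced-end} is, by definition, exactly
\[ \{\, R\in\F(\Lambda') \;\mid\; Q\cap R=\varnothing,\ Q\cup R\text{ is a $\Lambda$-covering frame},\ \Call{CFWF}{\Lambda,Q\cup R}=(\Omega,U)\,\}, \]
where $(\Omega,U)$ is the lexicographically minimal value of \Call{CFWF}{$\Lambda,Q\cup R$} over admissible $R$ (if $F=\varnothing$ there is nothing to prove). But this is precisely the set $\mathcal S$ of Lemma \ref{lemma:limited} for the data $(\Lambda',\Lambda,Q,(\Omega,U))$, and the two inclusion/spanning relations recorded above are exactly the hypotheses that lemma requires; hence $|F|\le h_d$.

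In the unbalanced subcase, every element of $F$ is of the form $W=R\cup T$ with $R\in E$ and $T\in S_R=\Call{EquivariantFrames2}{\Lambda_h,Q\cup R}$. Unwinding the definition of $E$ on line \ref{line:E} and the contract of that last recursive call, I would check that $R\subseteq\Lambda'$ and $T\subseteq\Lambda_h$ (so $W$ is a $(\Lambda'\cup\Lambda_h)$-frame, the two pieces being disjoint), that $Q\cap W=\varnothing$, and that $Q\cup W$ is a $\Lambda$-covering frame --- the covering property obtained by gluing the $(\Lambda\setminus\Lambda_h)$-covering property of $Q\cup R$ to the $\Lambda_h$-covering property of $Q\cup R\cup T$ --- while $\Call{CFWF}{\Lambda,Q\cup W}$ equals the common lexicographically minimal value $(\Omega,U)$. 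Thus $F$ is contained in the set $\mathcal S$ of Lemma \ref{lemma:limited} for the data $(\Lambda'\cup\Lambda_h,\Lambda,Q,(\Omega,U))$, whose hypotheses were again verified above, so $|F|\le h_d$. The only real difficulty I anticipate is this bookkeeping: keeping straight which $\Lambda$ and $Q$ each recursive call actually receives (especially the effect of line \ref{line:begin}) and verifying that the set playing the role of ``$\Lambda'$'' in the definition of $F$, together with $Q$, affinely spans $\Lambda$ --- this being the one hypothesis that makes Lemma \ref{lemma:limited} applicable. No idea beyond Lemma \ref{lemma:limited} itself is needed.
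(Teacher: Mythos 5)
Your proof is correct and takes essentially the same route as the paper's: choose $h_d$ as in Lemma \ref{lemma:limited}, handle the base case and the congruent-mod-$2$ case by induction along the recursion tree, and bound the sets $F$ assigned on lines \ref{line:F-first-assignment} and \ref{line:F-second-assignment} by Lemma \ref{lemma:limited}. The paper's proof is simply terser, omitting the verification of the hypotheses of Lemma \ref{lemma:limited} (the spanning condition $\Lambda\subseteq\Span(Q\cup\Lambda')$ and, in the unbalanced case, viewing each $R\cup T$ as a $(\Lambda'\cup\Lambda_h)$-frame satisfying the conditions defining $\mathcal{S}$), which you carry out correctly.
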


\begin{proof}
  The constant $h_d$ will be the same as that of Lemma \ref{lemma:limited}.
  Let us analyze the possible return values of a call to \Call{EquivariantFrames2}{}.
  The size of the set returned in line \ref{line:first-case-end} is that of \Call{EquivariantFrames2}{$\Lambda',Q'$}; working by induction on the diameter of $\Lambda$, we can assume that such size is $\leq h_d$.
  The other possible return values are those assigned in lines \ref{line:F-first-assignment} and \ref{line:F-second-assignment}; in both cases, Lemma \ref{lemma:limited} assures that the size is $\leq h_d$. \qed
\end{proof}

\begin{theorem}
  \label{thm:complexity-canonical-frames2}
  If the dimension $d$ is fixed, then the asymptotic complexity of Algorithm \ref{alg:canonical-frames2} is $O(n \log^2 n \; s \, \mu(s))$, where $n$ and $s$ are as in the statement of Theorem \ref{thm:complexity-canonical-form-with-frame}.
\end{theorem}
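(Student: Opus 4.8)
The plan is to analyze the recursion tree of Algorithm \ref{alg:canonical-frames2}, attaching to every call \textsc{EquivariantFrames2}$(\Lambda, Q)$ three parameters: the number of points $m = |\Lambda|$ left after the pruning on line \ref{line:begin}, a bound $s$ on the bit size of the coordinates occurring, and the \emph{working dimension} $k = \dim\Span(\Lambda\cup Q) - \dim\Span(Q)$, i.e.\ the number of frame points still to be produced (so $0\le k\le d$). A preliminary check shows that the bit size of the coordinates stays $O(s)$ down the whole recursion: translating by a point of $\Lambda$ enlarges magnitudes by at most one bit, dividing by $2$ shrinks them, and the calls to \textsc{CanonicalFormWithFrame} return $O(s)$-bit data by Theorem \ref{thm:complexity-canonical-form-with-frame} (and $|Q|\le d+1$ always). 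For the non-recursive cost of one node, Lemma \ref{lemma:constant-frames} gives that the recursively produced frame sets have $O(1)$ elements of size $\le d+1$, so $\Lambda'$ (lines \ref{line:big-union}, \ref{line:lambda'}) has $O(1)$ points and the enumeration of $\F(\Lambda')$, the covering tests, the residue-class partition and its sorting all cost $O(m\,\mu(s))$; the delicate term is choosing the lexicographically minimal value of \textsc{CanonicalFormWithFrame}, of which there are $O(1)$ candidates costing $O(m\,\mu(s))$ each to compute (Theorem \ref{thm:complexity-canonical-form-with-frame}), while comparing their set-valued outputs needs an extra $O(m\log m)$ comparisons of $O(s)$-bit numbers to put them in canonical order. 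Hence the per-node cost is $O(m\log m\,\mu(s))$.

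The heart of the argument is the combinatorics of the tree. Termination is Theorem \ref{thm:correctness-canonical-frames2}, and each edge is of one of three types: a \emph{congruence edge} (the case of line \ref{line:congruence-condition2}) halves the diameter and preserves $m$ and $k$; a \emph{splitting edge} (a child on line \ref{line:second-call2}, or a small child on line \ref{line:third-call2}) has $m'\le m/2$ and $k'\le k$; and a \emph{peeling edge} (one of the calls on line \ref{line:fourth-call2}) has $m'< m$ but, crucially, $k'\le k-1$. I would prove the last point thus: each $R\in E$ satisfies $\Span(Q\cup R)\supseteq\Lambda_1\cup\dots\cup\Lambda_{h-1}$, and every $\Lambda_i$ is a nonempty residue class of $\Lambda\setminus\Span(Q)$, hence has points off $\Span(Q)$; so $R\neq\varnothing$, and since $Q\cup R$ is a frame, $\dim\Span(Q\cup R)=\dim\Span(Q)+|R|\ge\dim\Span(Q)+1$, while $\Lambda_h\cup Q\cup R\subseteq\Span(\Lambda\cup Q)$. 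Therefore any root-to-leaf path has at most $d$ peeling edges, at most $\log_2 n$ splitting edges (since $m$ never increases and halves at each), and at most $O(s)$ congruence edges (the diameter is a positive integer that never increases and halves at each); so the tree has depth $O(\log n + s)$.

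I would then estimate the total cost as $O(\log n\,\mu(s))\sum_v |\Lambda(v)|$ and bound $\sum_v |\Lambda(v)|$ by charging each original point $p$ separately. The nodes whose pruned input contains $p$ form a subtree that branches — by a factor $|E| = O(1)$ — only at unbalanced nodes with $p\in\Lambda_h$, and every such branching is along peeling edges, which strictly decrease $k$ by the claim above; so the subtree has at most $d$ levels of branching, hence $O(1)$ root-to-leaf paths, hence meets each of the $O(\log n+s)$ depths in $O(1)$ nodes. Thus $p$ lies in $O(\log n+s)$ nodes, $\sum_v |\Lambda(v)| = O(n(\log n+s))$, and the running time is $O(n\log n(\log n+s)\,\mu(s)) = O(n\log^2 n\; s\,\mu(s))$ using $\log n + s = O(\log n\, s)$.

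The main difficulty I anticipate is exactly the identification of the working dimension as the quantity that caps the unbalanced recursion, so that the multiplicative $O(1)$-factor branching is confined to at most $d$ levels, together with arranging the amortized per-point accounting so that these $d$ branching levels do not compound into a spurious $(\log n + s)^{d+1}$ overhead. Keeping straight which of $m$, the diameter, and $k$ is preserved, halved, or strictly decreased along each edge type is where the bookkeeping is most error-prone.
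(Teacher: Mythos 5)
Your proof is correct, and it takes a genuinely different route from the paper's. The paper proceeds by direct induction on the triple $(|\Lambda|,\diam(\Lambda),|Q|)$, proving that each call costs at most $\gamma f(w)\,n\log^2 n\,\log\delta\,\mu(s)+\beta$ with $f(w)=(2\eta)^w$ and $w=d-\dim\Span(Q)$: the factor $\log\delta$ supplies the slack absorbed in the congruence case, the drop from $\log^2 n$ to $(\log n-1)^2$ supplies the slack for the balanced and unbalanced splits, and the exponential potential $f(w)$ absorbs the $\eta$-fold branching of the unbalanced case --- which is only affordable because each call on line \ref{line:fourth-call2} strictly increases $\dim\Span(Q)$, i.e.\ exactly your observation that peeling edges decrease the working dimension $k$. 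You instead unfold the recursion tree and amortize per point: the nodes containing a given point form a connected subtree that branches (by a factor $O(1)$, via Lemma \ref{lemma:constant-frames}) only along peeling edges, hence at most $d$ times per path, hence the point lies in $O(\log n+s)$ nodes; combined with the per-node cost $O(m\log m\,\mu(s))$ this sums to $O(n\log n(\log n+s)\,\mu(s))$. Both arguments rest on the same two pillars (the constant bound of Lemma \ref{lemma:constant-frames} and the dimension increase in the unbalanced recursive calls), but your global accounting avoids the multiplicative compounding of $\log^2 n$ with $\log\delta$ that the paper's induction builds in, and so in fact yields a slightly sharper bound, $O(n\log n(\log n+s)\,\mu(s))\subseteq O(n\log^2 n\,s\,\mu(s))$. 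The only point I would ask you to spell out is the claim that coordinate sizes stay $O(s)$: the frame $Q$ acquires rational entries whose denominators are powers of $2$, and the exponent must be bounded by the number of congruence steps along the path, which is $O(s)$ because the diameter never increases and halves at each such step; this is the same level of care the paper itself takes, so it is not a gap relative to the original.
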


\begin{proof}
  We show by induction that the execution of \Call{EquivariantFrames2}{$\Lambda,Q$} requires at most $\gamma f(w)\,n\log^2 n \,\log\delta\;\mu(s) + \beta$ operations, where $\gamma$ and $\beta$ are some constants (depending on $d$), $w = d-\dim\Span(Q)$, $\delta$ is the diameter of $\Lambda$ (which we will denote by $\diam(\Lambda)$), and the function $f$ is defined later in the proof (and depends on $d$). Once we prove this, we are done since $w \leq d = O(1)$ and $\log\delta = O(s)$.
  
  The induction is made on the triple $(|\Lambda|, \diam(\Lambda), |Q|)$. The ordering on such triples is the lexicographic one.
  
  {\bf Base case: $|\Lambda| = 1$.} Only lines \ref{line:begin}-\ref{line:return-base-case} are executed, and the total number of operations is $O(1)$. Thus, it is sufficient to choose the constant $\beta$ large enough.

  {\bf Inductive step.} Lines \ref{line:begin}-\ref{line:end-first-if} require $O(ns)$ operations.
  Let us now analyze lines \ref{line:congruence-condition2}-\ref{line:first-case-end}: their cost is $O(ns)$ plus the cost of the recursive call in line \ref{line:first-call2}, which is (by induction) $\gamma f(w)\,n\log^2 n \,(\log\delta - 1)\,\mu(s) + \beta$, since the diameter of $\Lambda$ is halved.
  
  Lines \ref{line:second-case-begin}-\ref{line:sort} require $O(ns)$ operations, since $h\leq 2^d = O(1)$.
  
  Now we turn to lines \ref{line:balanced-begin}-\ref{line:balanced-end}. Let $n_i = |\Lambda_i|$. The cost of the recursive calls alone (line \ref{line:second-call2}) is then, by induction,
  \begin{IEEEeqnarray*}{rCl}
    && \sum_{i=1}^h \big( \gamma  f(w)\, n_i \log^2 n_i\log\delta\;\mu(s) + \beta \big) \\
    &\leq & \gamma f(w)\, (n_1+\dots+n_h) \log^2 \frac{n}{2} \,\log \delta\;\mu(s) + h\beta \\
    &=& \gamma f(w)\, n \,(\log n -1)^2 \log\delta \; \mu(s) + O(1).
  \end{IEEEeqnarray*}
  By Lemma \ref{lemma:constant-frames}, the size of $\Lambda'$ in line \ref{line:big-union} is $O(1)$, so the number of operations required for lines \ref{line:F-first-assignment}-\ref{line:balanced-end} is $O(n\,\mu(s))$ (by Theorem \ref{thm:complexity-canonical-form-with-frame}) plus the cost of sets comparison.
  Comparing $O(1)$ sets of $O(n)$ elements requires $O(n\log n)$ comparisons of elements, each taking $O(s)$ operations, for a total of $O(ns\log n)$ operations.
  
  We finally turn to lines \ref{line:unbalanced-begin}-\ref{line:unbalanced-end}.
  The cost of the recursive calls in line \ref{line:third-call2} is
  \begin{IEEEeqnarray*}{rCl}
    && \sum_{i=1}^{h-1} \big( \gamma f(w)\, n_i \log^2 n_i \log\delta\;\mu(s) + \beta \big) \\
    &\leq & \gamma  f(w) \, (n_1+\dots+n_{h-1}) \log^2 \frac{n}{2} \,\log \delta\;\mu(s) + (h-1)\beta \\
    &\leq& \gamma f(w)\, \frac{n}{2}\, (\log n -1)^2\,\log\delta \;\mu(s) + O(1).
  \end{IEEEeqnarray*}
  By Lemma \ref{lemma:constant-frames}, the sets $\Lambda'$ and $E$ have size $O(1)$, so lines \ref{line:lambda'}-\ref{line:E} require $O(\mu(s))$ operations to be executed. As another consequence, there is a bound $\eta$ on the number of recursive calls in line \ref{line:fourth-call2}. Notice that $|Q\cup R| > |Q|$, because $\Lambda'$ is nonempty (since $h\geq 2$) and contains only points that don't belong to $\Span(Q)$ (thanks to line \ref{line:begin}). So, each of the calls of line \ref{line:fourth-call2} requires a number of operations bounded by
  \[ \gamma f(w-1)\, n\log^2 n \log\delta\;\mu(s) + \beta. \]
  Finally, $O(n\,\mu(s) + ns\log n)$ operations are required for lines \ref{line:F-second-assignment}-\ref{line:unbalanced-end} (as for lines \ref{line:F-first-assignment}-\ref{line:balanced-end}).
  
  We now define the function $f$ in the following way: $f(w) = (2\eta)^w$, where $\eta$ is the bound we introduced in the previous paragraph.
  
  Let us put everything together. We obtain the following results, depending on which lines are executed.
  \begin{itemize}
    \item Lines \ref{line:begin}-\ref{line:end-first-if} and \ref{line:congruence-condition2}-\ref{line:first-case-end} (all points of $\Lambda$ are congruent modulo $2$):
    \begin{IEEEeqnarray*}{rCl}
      && O(ns) + \gamma f(w)\,n\log^2 n \,(\log\delta - 1)\,\mu(s) + \beta \\
      &\leq & \alpha_1 n s + \gamma  f(w)\, n\log^2 n \log \delta\;\mu(s) - \gamma f(w)\,n\log^2 n\;\mu(s) \\
      &=& \gamma  f(w)\, n\log^2 n \log \delta\;\mu(s) + n\,\big(\alpha_1 s - \gamma  f(w) \log^2 n \;\mu(s)\big) \\
      &\leq& \gamma  f(w)\, n\log^2 n \log \delta\;\mu(s).
    \end{IEEEeqnarray*}
    The first inequality holds for any constant $\alpha_1$ such that $\alpha_1ns$ is greater than the term $O(ns)+\beta$.
    The last inequality holds for a sufficiently large value of $\gamma$, since $\mu(s)=\Omega(s)$.
    
    \item Lines \ref{line:begin}-\ref{line:end-first-if}, \ref{line:second-case-begin}-\ref{line:sort} and \ref{line:balanced-begin}-\ref{line:balanced-end} (balanced partition):
    \begin{IEEEeqnarray*}{rCl}
      && O(n\,\mu(s) + ns\log n) + \gamma f(w)\, n \,(\log n -1)^2 \log\delta \; \mu(s) \\
      &\leq& \alpha_2\, n\log n\;\mu(s) + \gamma f(w)\, n\log^2 n \log\delta\;\mu(s) - \gamma f(w)\,n\log n\,\log\delta\;\mu(s) \\
      &=& \gamma f(w)\, n\log^2 n \log\delta\;\mu(s) + n\log n\,\mu(s) \, \big( \alpha_2 - \gamma  f(w) \log\delta \big) \\
      &\leq& \gamma f(w)\, n\log^2 n \log\delta\;\mu(s).
    \end{IEEEeqnarray*}
    The first inequality holds for any constant $\alpha_2$ such that $\alpha_2\, n\log n\;\mu(s)$ is greater than the term $O(n\,\mu(s) + ns\log n)$.
    The last inequality holds for a sufficiently large value of $\gamma$.
    
    \item Lines \ref{line:begin}-\ref{line:end-first-if}, \ref{line:second-case-begin}-\ref{line:sort} and \ref{line:unbalanced-begin}-\ref{line:unbalanced-end} (unbalanced partition):
    \begin{IEEEeqnarray*}{rCl}
      && O(n\,\mu(s) + ns\log n) + \gamma f(w)\, \frac{n}{2}\, (\log n -1)^2\,\log\delta \;\mu(s) \,+\\
      && +\, \eta\, \gamma f(w-1)\, n\log^2 n \log\delta\;\mu(s) + \eta\beta \\
      &\leq& \alpha_3\, n\log n\,\mu(s) + \frac12 \, \gamma f(w)\, n\,\log n\, (\log n - 1) \log\delta \;\mu(s)\,+\\
      && +\, \eta\, \gamma f(w-1)\, n\log^2 n \log\delta\;\mu(s) \\
      &=& \alpha_3\, n\log n\,\mu(s) + \frac12 \, \gamma f(w)\, n\,(\log^2 n - \log n) \log\delta \;\mu(s) \,+\\
      && +\, \frac12\, \gamma f(w)\, n\log^2 n\log\delta\;\mu(s) \\
      &=& \alpha_3\, n\log n\,\mu(s) + \gamma f(w)\, n\,\log^2 n \log\delta\;\mu(s) - \frac12 \gamma f(w)\, n\log n \log\delta \;\mu(s)\\
      &=& \gamma f(w)\, n\log^2 n \log\delta\;\mu(s) + n\log n\,\mu(s) \left( \alpha_3 - \frac12 \gamma f(w)\, \log\delta \right) \\
      &\leq& \gamma f(w)\, n\log^2 n \log\delta \;\mu(s).
    \end{IEEEeqnarray*}
    The first inequality holds for any constant $\alpha_3$ such that $\alpha_3\, n\log n\,\mu(s)$ is greater than the term $O(n\,\mu(s) + ns\log n) + \eta\beta$.
    The second step follows from the identity $2\eta f(w-1) = f(w)$, which is an immediate consequence of the definition of $f$.
    The last inequality is true for a sufficiently large value of $\gamma$. \qed
  \end{itemize}
\end{proof}

\section{Canonical form for \texorpdfstring{$X_d$}{Xd}}
\label{sec:canonical-form}

Using the results of Sections \ref{sec:canonical-form-with-frame} and \ref{sec:canonical-frames}, we are now able to easily describe an algorithm to compute a canonical form for $X_d$.

\begin{algorithm}[Canonical form for $X_d$]
  \label{alg:canonical-form}
  This algorithm takes as input a finite set $\Lambda \subseteq \Z^d$, and returns a canonical form for $\Lambda$.
  
  \begin{alg}
    \Function{CanonicalForm}{$\Lambda$}
      \State $S \gets \Call{EquivariantFrames2}{\Lambda,\varnothing}$ \label{line:call-canonical-frames2}
      \ForAll{$R\in S$} \label{line:canonical-form-start-for}
	\State $(\Omega_R,U_R) \gets \Call{CanonicalFormWithFrame}{\Lambda,R}$
      \EndFor \label{line:canonical-form-end-for}
      \State \Return $\min \, \{ \, \Omega_R \mid R\in S \, \}$
    \EndFunction
  \end{alg}
\end{algorithm}

In words, Algorithm \ref{alg:canonical-form} first equivariantly computes a set $S$ of complete $\Lambda$-frames, using Algorithm \ref{alg:canonical-frames2}. Then, for each $\Lambda$-frame $R\in S$, it finds a corresponding canonical set $\Omega_R$. Finally, it returns the set which is lexicographically minimal among the computed ones.

\begin{theorem}
  \label{thm:correctness-canonical-form}
  The output of Algorithm \ref{alg:canonical-form} is a canonical form for $X_d$ with respect to the action of $\Aff(d,\Z)$.
  Moreover its worst-case asymptotic complexity is $O(n\log^2 n\, s\, \mu(s))$, where $n$ and $s$ are defined as in Section \ref{sec:preliminaries}.
\end{theorem}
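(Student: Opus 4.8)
Write $f$ for the function computed by Algorithm \ref{alg:canonical-form}. The plan is to check the two conditions of Definition \ref{def:canonical-form} for $f$ in turn, and then to bound the running time by adding the costs of the three stages of the algorithm, using the results of Sections \ref{sec:canonical-form-with-frame} and \ref{sec:canonical-frames}. Throughout I would fix, once and for all, the representation of a finite subset of $\Z^d$ as the lexicographically sorted list of its points, so that the ``$\min$'' on the last line and all lexicographic comparisons are unambiguous. For condition (1): by the specification of Algorithm \ref{alg:canonical-frames2} together with the remark following it, the call on line \ref{line:call-canonical-frames2} with $Q=\varnothing$ returns a nonempty set $S$ of complete $\Lambda$-frames; hence every pair $(\Lambda,R)$ with $R\in S$ lies in $Y_d$, and by property (1) stated at the start of Section \ref{sec:canonical-form-with-frame} each $\Omega_R$ is a finite subset of $\Z^d$ lying in $\Orb(\Lambda)$. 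Since $S\neq\varnothing$, the returned value $\min\{\Omega_R\mid R\in S\}$ equals one of the $\Omega_R$, hence it too is a finite subset of $\Z^d$ lying in $\Orb(\Lambda)$; this is condition (1).

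For condition (2), fix $\varphi\in\Aff(d,\Z)$ and set $\tilde\Lambda=\varphi(\Lambda)$. Since $\varphi(\varnothing)=\varnothing$, Theorem \ref{thm:correctness-canonical-frames2} shows that \Call{EquivariantFrames2}{$\tilde\Lambda,\varnothing$} returns $\varphi(S)=\{\varphi(R)\mid R\in S\}$. For each $R\in S$, applying the weak canonical form property of Theorem \ref{thm:correctness-canonical-form-with-frame} to the pair $(\Lambda,R)\in Y_d$ gives
\[ \Call{CanonicalFormWithFrame}{\varphi(\Lambda),\varphi(R)}=\Call{CanonicalFormWithFrame}{\Lambda,R}, \]
so, writing $\tilde\Omega_{R'}$ for the set produced inside the loop of lines \ref{line:canonical-form-start-for}-\ref{line:canonical-form-end-for} when Algorithm \ref{alg:canonical-form} is executed on $\tilde\Lambda$ with frame $R'\in\varphi(S)$, one has $\tilde\Omega_{\varphi(R)}=\Omega_R$ as subsets of $\Z^d$ (not merely up to an affinity). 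Hence the finite families $\{\tilde\Omega_{R'}\mid R'\in\varphi(S)\}$ and $\{\Omega_R\mid R\in S\}$ coincide, and taking the lexicographic minimum of both sides yields $f(\tilde\Lambda)=f(\Lambda)$. With condition (1) this shows that $f$ is a canonical form for $X_d$ with respect to $\Aff(d,\Z)$.

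For the complexity, the call on line \ref{line:call-canonical-frames2} costs $O(n\log^2 n\,s\,\mu(s))$ by Theorem \ref{thm:complexity-canonical-frames2}. A short induction, using the same lexicographic well-ordering on triples $(|\Lambda|,\diam(\Lambda),|Q|)$ as in the proof of Theorem \ref{thm:complexity-canonical-frames2}, shows that every frame returned by \Call{EquivariantFrames2}{} consists of points of its input set: the base case returns a subset of $\Lambda$, and each recursive case only recombines such frames or rescales them and then undoes the rescaling. In particular every $R\in S$ consists of points of $\Lambda$, so $(\Lambda,R)$ satisfies the hypotheses of Theorem \ref{thm:complexity-canonical-form-with-frame} with the same parameter $s$. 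By Lemma \ref{lemma:constant-frames}, $|S|=O(1)$, so the loop of lines \ref{line:canonical-form-start-for}-\ref{line:canonical-form-end-for} costs $O(n\,\mu(s))$ and produces $O(1)$ sets $\Omega_R$, each consisting of $n$ points of binary size $O(s)$; sorting these $O(1)$ lists and comparing them to extract the minimum costs $O(ns\log n)$. Adding the three contributions and using $\mu(s)=\Omega(s)$ and $\log n\leq\log^2 n$ gives the claimed bound $O(n\log^2 n\,s\,\mu(s))$.

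The argument has no single deep step once Sections \ref{sec:canonical-form-with-frame} and \ref{sec:canonical-frames} are available; the two places that require care are the ones noted above. First, one must verify that the frames produced by \Call{EquivariantFrames2}{} have coordinates of size $O(s)$, so that Theorem \ref{thm:complexity-canonical-form-with-frame} can be applied with the \emph{same} $s$ rather than with a bound that might grow through the recursion. Second, condition (2) genuinely uses the \emph{exact} equality of output pairs in Theorem \ref{thm:correctness-canonical-form-with-frame}, not merely equality up to an affinity: it is this exactness that makes the unordered family $\{\Omega_R\mid R\in S\}$ literally the same collection of subsets of $\Z^d$ for $\Lambda$ and for $\varphi(\Lambda)$, so that its minimum is constant on $\Aff(d,\Z)$-orbits.
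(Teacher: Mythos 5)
Your proof is correct and follows essentially the same route as the paper: condition (1) from property (1) of Section \ref{sec:canonical-form-with-frame}, condition (2) from Theorems \ref{thm:correctness-canonical-form-with-frame} and \ref{thm:correctness-canonical-frames2}, and the complexity bound by combining Theorem \ref{thm:complexity-canonical-frames2}, Lemma \ref{lemma:constant-frames} and Theorem \ref{thm:complexity-canonical-form-with-frame}. You merely spell out details the paper leaves implicit (the exact, not just up-to-affinity, equality of the $\Omega_R$ under $\varphi$, and that the returned frames lie in $\Lambda$ so the same size bound $s$ applies), which is fine.
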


\begin{proof}
  The first property of Definition \ref{def:canonical-form} is verified since Algorithm \ref{alg:canonical-form-with-frame} satisfies property 1 at the beginning of Section \ref{sec:canonical-form-with-frame}.
  The second property of Definition \ref{def:canonical-form} is an immediate consequence of Theorems \ref{thm:correctness-canonical-form-with-frame} and \ref{thm:correctness-canonical-frames2}.
  
  By Theorem \ref{thm:complexity-canonical-frames2}, the execution of line \ref{line:call-canonical-frames2} requires $O(n\log^2 n\, s\,\mu(s))$ operations.
  By Lemma \ref{lemma:constant-frames}, the size of $S$ is $O(1)$. Thus, by Theorem \ref{thm:complexity-canonical-form-with-frame}, the execution of lines \ref{line:canonical-form-start-for}-\ref{line:canonical-form-end-for} requires $O(n\,\mu(s))$ operations.
  The overall asymptotic complexity is therefore $O(n\log^2 n\, s\,\mu(s))$. \qed
\end{proof}

If Algorithm \ref{alg:canonical-form-with-frame} is modified so that it also returns the affinity $\psi$, then Algorithm \ref{alg:canonical-form} can be also modified to return an affinity which maps $\Lambda$ to its canonical form.
In this way, if two sets have the same canonical form, it is possible to explicitly construct an affinity which maps one to the other.

\section{Canonical form of Alexander polynomials}
\label{sec:alexander}

We now turn to the application of our algorithm to the computation of a canonical form of the Alexander polynomial of a group.
Let us first recall the construction of such polynomial, as given by Fox \cite{fox1953free,fox1954calculus2}.

Let $G = \langle x_1, \dots, x_n \mid r_1, \dots, r_k \rangle $ be a finitely presented group. Using ``free differential calculus'' \cite{fox1953free}, in \cite{fox1954calculus2} Fox defines the \emph{Jacobian} $J$ of the presentation, which is a $k\times n$ matrix with entries in the group ring $\Z G$.
Assume from now on that the abelianized group $H = G/G'$ is isomorphic to $\Z^d$, via some fixed isomorphism $\psi\colon H \to \Z^d$. Let $\phi\colon G \to H$ be the abelianization map.
Then the Jacobian $J$ can be mapped to the \emph{Alexander matrix} $\mathcal{A} = \psi\circ\phi(J)$ with entries in the group ring of $\Z^d$, i.e.\ in the Laurent polynomial ring $R = \Z[t_1^{\pm 1}, \dots, t_d^{\pm 1}]$.
As pointed out by Fox, such construction generalizes Alexander's classical construction \cite{alexander1928topological}.
The ideal $\epsilon_i$ generated by the minor determinants of $\mathcal{A}$ of order $n-i$ is called the \emph{$i$-th elementary ideal} of $\mathcal{A}$.
The elementary ideals are independent of the chosen presentation, but they do depend on $\psi$.

Assume now that $G$ admits a presentation with more generators than relations.
Important examples are fundamental groups of the complement of links \cite{rolfsen1976knots}.
Then for $d=1$ the elementary ideal $\epsilon_1$ is principal, and for $d=2$ it is the product of a principal ideal and the \emph{fundamental ideal} $(t_1-1, \dots, t_d-1)$.
In both cases, a generator $\Delta(t_1,\dots,t_d)$ of the obtained principal ideal is called an \emph{Alexander polynomial}.
Such polynomial is defined uniquely up to multiplication by monomials of the form $\pm t_1^{\lambda_1}\cdots t_d^{\lambda_d}$, if the isomorphism $\psi$ is fixed.
As show in \cite{bellettini2015shape}, changing $\psi$ by some linear automorphism $A$ of $\Z^d$ does not affect the fundamental ideal, but affects $\epsilon_1$ (and therefore the Alexander polynomial $\Delta$) transforming the exponents vector of every monomial by $A$.
Therefore an invariant of $G$ is given by a canonical form of the Alexander polynomial with respect to to the action of $\Aff(d, \Z)$ and change of sign.

We are now going to illustrate how to adjust Algorithm \ref{alg:canonical-form} in order to compute a canonical form of any Laurent polynomial $\Delta(t_1,\dots,t_d) \in R$.
The possibility of changing the sign of the entire polynomial can be easily settled (e.g.\ choosing the sign so that the leading term has positive coefficient), so we are going to focus on the action of $\Aff(d,\Z)$ only.
Let us write
\[ \Delta(t_1,\dots,t_d) = \sum_{m_1,\dots,m_d\in \Z} \alpha_{m_1,\dots,m_d} \, t_1^{m_1}\cdots t_d^{m_d}, \]
where only a finite number of coefficients $\alpha_{m_1,\dots,m_d}$ is non-zero.
The action of $\Aff(d,\Z)$ on $R$ is by means of $\Z$-linear automorphisms, so it can be described on a single monomial $t_1^{m_1}\cdots t_d^{m_d}$. An integer affinity $\varphi\in\Aff(d,\Z)$ maps the monomial $t_1^{m_1}\cdots t_d^{m_d}$ to the monomial $t_1^{p_1}\cdots t_d^{p_d}$, where
$(p_1,\ldots,p_d)^t=\varphi(m_1,\ldots,m_d)^t$.
A polynomial $\Delta(t_1,\dots,t_d)$ can be viewed as a finite set of points in $\Z^d$ (the set of vectors $(m_1,\dots,m_d)^t$ for which the coefficient $\alpha_{m_1,\dots,m_d}$ is non-zero) with an integer coefficient $\alpha_{m_1,\dots,m_d}$ associated to each point.
Under this identification, the action of $\Aff(d,\Z)$ is precisely the natural action on the subsets of $\Z^d$, with the subtlety that these subsets are \emph{weighted}. In analogy with the previous notations, let $X_d^w$ be the set of finite weighted subsets of $\Z^d$ (weights are given by non-zero integers).

Algorithms \ref{alg:canonical-form-with-frame}, \ref{alg:canonical-frames2} and \ref{alg:canonical-form}, without any change, work as well for the case of $X_d^w$.
What changes is that every operation involving elements of $X_d$ must now involve elements of $X_d^w$.
For instance \Call{CanonicalFormWithFrame}{} must output, as the first element of the pair, a weighted set.
Therefore, in lines \ref{line:F-second-assignment}-\ref{line:unbalanced-end} of Algorithm \ref{alg:canonical-frames2}, the minimality check must take into account the weights as well, so that two sets with the same elements but with different weight are not treated as equal.

For simplicity we assume that the size of the binary representation of each weight is also $O(s)$, so that the comparison of weights takes at most as much as the comparison of the corresponding points.
Then the cost of dealing with weights is fully covered by the original complexity bound.
We omit the proof of correctness and the proof of the complexity bound since they are entirely analogous to the case of $X_d$, which was thoroughly described in the previous sections.

Apart from the intrinsic interest of computing a canonical Alexander polynomial, what we have done might be also applied to the more general problem of testing isomorphism between finitely presented groups.
This is a classical problem in computational group theory \cite{eick2015computational,holt1992testing,holt2005handbook,sims1994computation}, and is not decidable in general (this follows from a work of Novikov on the unsolvability of the word problem \cite{novikov1955algorithmic}).
To try to show that two finitely presented group are not isomorphic, the usual approach consists in analyzing their finite quotients of small order \cite{holt1992testing}: if these are not the same, then the two groups cannot be isomorphic.
Computing a canonical Alexander polynomial could be a quite different method to try to distinguish non-isomorphic groups.
We did not investigate the relation between this approach and the usual ones. In particular, we do not know if there exist non-isomorphic groups that can be distinguished by their canonical Alexander polynomial but not (easily) by their small finite quotients.

In the case of fundamental groups of the complement of links, the parameter $d$ equals the number of components of the link.
In particular, large classes of interesting and well-studied groups arise even for very small values of $d$.
On the other hand, the size of the Alexander polynomial can grow fast in terms of the number of crossings.
This justifies the choice to fix the value of $d$ in the analysis of the asymptotic complexity.

\section{Conclusions}
\label{sec:conclusion}

The algorithm we have presented in Sections \ref{sec:canonical-form-with-frame}-\ref{sec:canonical-form} computes a canonical form of subsets of $\Z^d$ up to affinity, and has asymptotic complexity $O(n\log^2 n\, s\,\mu(s))$ for any fixed dimension $d$.
In particular, the problem we consider is fixed-parameter tractable with respect to the dimension $d$.
The dependence on $n$ is obviously optimal up to logarithmic factors. 

We chose to omit explicit analysis on how the multiplicative constant grows in terms of $d$ (the dependence is probably at least doubly exponential).
There are many possible improvements that can lower such constant. For instance, in Algorithm \ref{alg:canonical-frames2}, one can further exploit the canonical partition $\{\Lambda_1,\dots,\Lambda_h\}$ to significantly reduce the number of frames $R$ considered in lines \ref{line:F-first-assignment}-\ref{line:balanced-end} and \ref{line:F-second-assignment}-\ref{line:unbalanced-end}.
However, since such improvements affect only the constant, we have preferred to ignore them in order to keep the pseudocode as essential as possible.

It is finally worth noticing that the presented algorithms can be easily modified to also output an affinity which sends the input set $\Lambda$ to its canonical form. Therefore, when deciding whether two given sets $\Lambda$ and $\Lambda'$ are in the same orbit with respect to the action of $\Aff(d,\Z)$, in case of an affirmative answer it is possible to explicitly obtain an affinity that sends $\Lambda$ to $\Lambda'$.

\section*{acknowledgements}
I would like to thank my father Maurizio Paolini for having introduced me to the problem and for his valuable suggestions.
I would also like to thank Luca Ghidelli for the useful discussions and for having carefully read this paper.
Then I would like to thank professors Patrizia Gianni, Carlo Traverso and Giovanni Gaiffi for their advice.
Finally I thank the referees, for the thorough revisions and for pointing out interesting references and connections.

\bibliography{bibliography}{}
\bibliographystyle{spmpsci}

\end{document}